\DeclareMathOperator*{\argmax}{arg\,max}
\newcommand{\blue}[1]{\textcolor{black}{#1}}
\newcommand{\vc}[1]{{\mathbf{#1}}}
\newtheorem*{definition}{Definition}
\newtheorem{lemma}{Lemma}
\newtheorem{theorem}{Theorem}
\newcolumntype{L}[1]{>{\raggedright\let\newline\\\arraybackslash\hspace{0pt}}m{#1}}
\newcolumntype{C}[1]{>{\centering\let\newline\\\arraybackslash\hspace{0pt}}m{#1}}
\newcolumntype{R}[1]{>{\raggedleft\let\newline\\\arraybackslash\hspace{0pt}}m{#1}}
\newcommand{\RNum}[1]{\uppercase\expandafter{\romannumeral #1\relax}}
\newcommand{\suchthat}{\;\ifnum\currentgrouptype=16 \middle\fi|\;}
\tikzstyle{int}=[draw, fill=blue!20, minimum size=2em]
\tikzstyle{init} = [pin edge={to-,thin,black}]
\tikzstyle{vertex}=[circle, draw, inner sep=0pt, minimum size=5pt]
\tikzset{symbol/.style={rectangle, draw, very thick,
minimum size=10mm, rounded corners=1mm}}
\tikzset{symbol2/.style={rectangle , draw,  thick,
minimum size=35mm, rounded corners=1mm}}
\tikzstyle{block} = [draw, fill=white, rectangle, 
\tikzstyle{sum} = [draw, fill=gray, circle, node distance=1cm]
\tikzstyle{input} = [coordinate]
\tikzstyle{output} = [coordinate]
\tikzstyle{pinstyle} = [pin edge={to-,thick,black}]
\begin{document}
\title{\centering {{Efficient Beam Alignment in Millimeter Wave Systems Using Contextual Bandits}}}
\author{Morteza Hashemi, Ashutosh Sabharwal,
        C.~Emre~Koksal,
        and~Ness~B.~Shroff
\thanks{M. Hashemi and C. E. Koksal are with the Department
of Electrical and Computer Engineering, Ohio State University, Columbus,
OH, 43210 USA, e-mail: hashemi.20@osu.edu, 	koksal.2@osu.edu}
\thanks{A. Sabharwal is with the Department
of Electrical and Computer Engineering, Rice University, Houston, TX 77005, USA, e-mail: ashu@rice.edu.}
\thanks{N. B. Shroff is with the Department
of Electrical and Computer Engineering and Department of Computer Science and Engineering, Ohio State University, Columbus,
OH, 43210 USA, e-mail: shroff.11@osu.edu.}}
\maketitle
\thispagestyle{plain}
\pagestyle{plain}

\begin{abstract}
In this paper, we investigate the problem of beam alignment in millimeter wave (mmWave) systems, and design an optimal algorithm to reduce the overhead. Specifically, due to directional communications, the transmitter and receiver beams need to be aligned, which incurs high delay overhead since without a priori knowledge of the transmitter/receiver location, the search space spans the entire angular domain. This is further exacerbated under dynamic conditions (e.g., moving vehicles) where the access to the base station (access point) is highly dynamic with intermittent on-off periods, requiring more frequent beam alignment and signal training. To mitigate this issue, we consider an online stochastic optimization formulation where the goal is to maximize the directivity gain (i.e., received energy) of the beam alignment policy within a time period. We exploit the inherent correlation and unimodality properties of the model, and demonstrate that contextual information improves the performance. To this end, we propose an equivalent structured Multi-Armed Bandit  model to optimally exploit the exploration-exploitation tradeoff. In contrast to the classical MAB models, the contextual information makes the lower bound on \emph{regret} (i.e., performance loss compared with an oracle policy)  independent of the number of beams. This is a crucial property since the number of all combinations of beam patterns can be large in transceiver antenna arrays, especially in massive MIMO systems. We further provide an asymptotically optimal beam alignment algorithm, and investigate its performance via simulations.
\end{abstract}

\section{Introduction}
To meet the exponentially growing demand in mobile data, the trend in wireless networks is migrating to higher frequencies combined with increasing number of antennas per device and per base station.  For instance, it is envisioned that in $5$G cellular systems certain portions of the millimeter wave (mmWave) band will be used, spanning the spectrum between $30$GHz to $300$GHz. However, propagation loss at mmWave frequencies is much higher due to a variety of factors including atmospheric absorption, basic Friis transmission-effect, and low penetration. When the users and/or surrounding objects are mobile, this effect is more pronounced such that different propagation paths become highly variable with intermittent on-off periods. Thus, unlike existing communication schemes, mmWave systems require highly directional communications to compensate for large channel losses. Thanks to recent advances in antenna technologies, large directional antenna arrays with much smaller form factors can be deployed in relatively small chip areas. Such arrays have the potential to focus the signal energy toward a specific direction, ``making up'' for the channel losses. 

In order to fully utilize directional communications, the transmitter and receiver beams need to be aligned. The experimental results in \cite{nitsche2015steering} demonstrate that in a system with a $7$ degree beam width, a misalignment of $18$ degree reduces the link
budget by around $17$ dB, which can reduce the maximum throughput
by up to $6$ Gbps or break the link entirely \cite{IeeeCode}. On the other hand, as a result of such ``pencil-beams" at the transmitter and receiver, beam alignment incurs a large overhead that scales with device mobility and the product of the transmitter-receiver beam resolution. 
In exhaustive search methods, both users and base stations have a predefined codebook of beam directions that cover the
entire angular space and are used sequentially to transmit and receive. Thus, the complexity of this exhaustive search is $O(N^2)$, where $N$
is the number of possible beam directions. To improve the search efficiency, the transmitter and receiver steering is decoupled in the 802.11ad standard such that the transmitter starts with a quasi-omnidirectional beam, while the receiver scans the space for the best beam direction. The process is then
reversed \cite{zhou2012efficient}. This approach reduces the search complexity to $O(N)$. Still, for a beam of a few degrees, the delay can be
hundreds of milliseconds to seconds \cite{zhu2014demystifying}, which would
easily stall real-time applications.  

Dynamic conditions make the beam alignment more challenging since there is the need for frequent beam alignment. Under such scenarios, we pose the following question that \emph{given the outcome of the past beam alignments, is it possible to extract some information and reduce the search space for the subsequent beam alignment procedures?} In particular, our work is based on the fact that successive beam alignments are stochastically correlated, and thus, outcome of the previous ``beam matching'' provides \emph{contextual information} for the subsequent matchings, thus eliminating the need to search the entire angular domain. We exploit \emph{correlation} and \emph{unimodality} properties across various beam matching.  Specifically, for a given beam matching, we call the difference between the transmitter and receiver direction as \emph{misalignment}. Because of correlation if matching at a larger misalignment is successful (i.e., received energy is above a threshold $\tau$), with a high probability a matching will be successful at a smaller misalignment as well. Furthermore, the directivity gain (or received energy) can be approximated as a unimodal function of the misalignment value. We exploit this contextual information in order to obtain a beam search scheme that quickly identifies the best beam direction and maximizes the directivity gain. We formulate the problem of finding the best beam pair as an online stochastic optimization where the objective is to maximize the expected amount of received energy within a given time period. 

To find the optimal solution, we show that this problem can be considered as an instance of the Multi Armed Bandit (MAB) model in which \emph{each transmit and receive beam pair is considered as a single arm.} Thus, the objective is to design a sequential arm selection (or, equivalently, beam alignment) strategy that maximizes the \emph{expected reward} (received energy) over a given time horizon.  Performance of MAB models is usually expressed in terms of \emph{regret} that is defined as the total expected reward loss compared with an oracle policy. Regret of the best algorithm is in the form of $O(K \log(T))$ in which $K$ is the number of arms and $T$ is the time horizon \cite{lai1985asymptotically}. In distinction with the classical MAB models, we exploit the contextual information of beam alignment that leads to a structured MAB model, and prove that \emph{the regret does not scale with the number of arms that is equal to the number of beam matchings.} This is a crucial property in  (massive) MIMO systems, and provides a fundamental performance limit
satisfied by any exhaustive beam selection algorithm. This limit quantifies the inevitable performance loss due to the need to explore sub-optimal beam pairs. It also characterizes the performance gains that can
be achieved by devising beam pair selection schemes that optimally exploit the correlations and the structural properties of the MAB problem. 
\emph{Therefore, in contrast to the previous works that explore the sub-optimal beam pairs by heuristics, our method optimally explores the sub-optimal beam pairs.} The following example illustrates how we achieve this goal. 

{\emph{\textbf{Example}: Let us consider a scenario where the transmitter beam direction is fixed at $75^{\circ}$ angle with respect to the receiver. For the sake of exposition, we assume a 2D setting. Assume that there are $16$ possible directions at the receiver, as shown in Fig. \ref{fig:polar-example}. Using the exhaustive beam selection scheme, each of 16 directions will be examined one at a time, and the direction with the largest received energy (from beacon messages) is picked. However, under dynamic conditions  (e.g., with mobility), the optimal beam direction can potentially change within a short period of time. In this case, we consider maximizing the received energy within a given period of time. Using our proposed scheme, the receiver assigns an index to each beam direction, and the beam with the highest index will be selected. The important point is that due to the correlation and unimodality properties, the search space will be limited to the neighborhood of the beam with maximum index. As a result, it prevents the need of a uniform exploration over the entire angular domain, thus mitigating the overhead of beam alignment when the number of beam directions becomes large.}} 

In summary, our contributions are as follows:

\begin{itemize}
\item We consider the beam alignment problem, and investigate the fundamental performance limits of the search-based beam alignment between the transmitter and receiver antennas. We model the problem of finding the best beam alignment as an online stochastic optimization problem. 

\item We exploit contextual information of the problem and formulate an equivalent structured Multi-Armed Bandit model. We experimentally demonstrate that the received power (approximately) follows a unimodal pattern.  

\item We derive a lower bound on the regret of any search-based algorithm, and demonstrate that the regret does not scale with the number of transmission and receive beams, thanks to the underlying structure of the problem. \blue{Finally, based on the OSUB algorithm in \cite{combes2014unimodal}, we propose the Unimodal Beam Alignment (UBA) algorithm that is shown to be asymptotically optimal.}
\end{itemize}


\begin{figure}[t]
\centering
\includegraphics[scale=.5, trim = 1.6cm 2.6cm 1.5cm 2.2cm, clip]{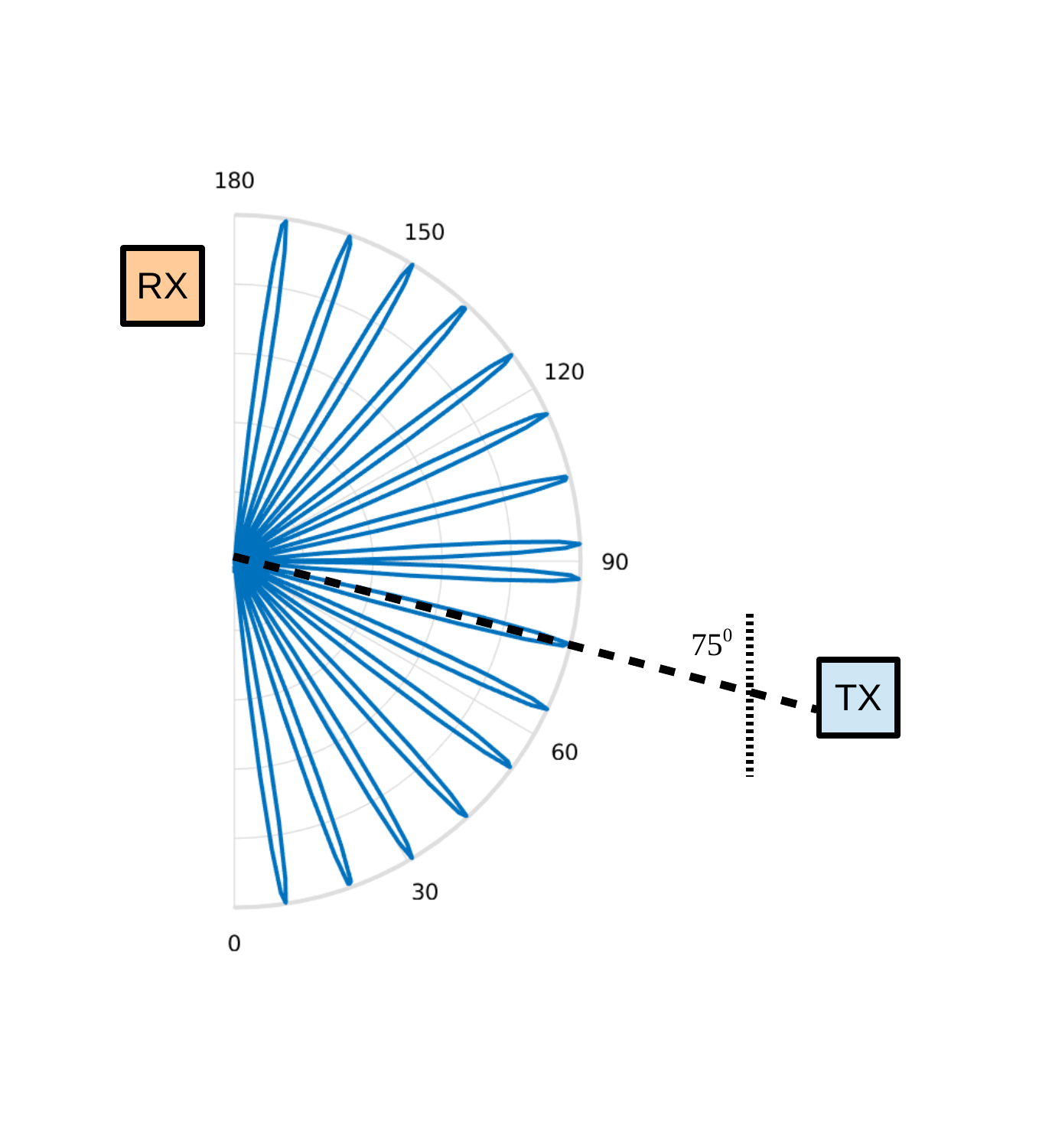}
\caption{{Beam alignment of the transmitter and receiver with 16 beams.} }
\label{fig:polar-example}
\end{figure}

\section{Related Work}

\subsection{Beam Alignment}
The authors in \cite{qi2016coordinated} perform initial access for clustered mmWave small cells using the power delay profile. In this case, base stations are coordinated in clusters, and communicate through a backhaul network. Base stations share their measurement reports obtained from the mobile devices, and location of the mobile is estimated based on the shared measurements. This will enable the base stations  to point at the estimated mobile location. Although this method is limited to line-of-sight scenarios, the probability of having at least three line-of-sight links (needed for mobile localization) increases by assuming larger cluster sizes.  In another line of research, the authors in \cite{desai2014initial} proposed a fast-discovery hierarchical search method, while \cite{singh2015feasibility} exploits the sparse multipath structure of the mmWave channel to optimize the choice of beamforming
directions. A cell discovery method is proposed
in \cite{barati2015directional} in which the base station periodically transmits synchronization signals to scan the entire angular space in time-varying random directions. 
In \cite{hur2013millimeter}, a beam alignment technique is designed based on adaptive subspace sampling and hierarchical beam codebooks. The authors in \cite{ali2017millimeter} use spatial information extracted at sub-6 GHz to help
estimate the best beam pairs at the transmitter and receiver at mmWave frequencies. In \cite{hassanieh2017agile}, a beam alignment scheme based on scanning several directions by one-shot is proposed.
In contrast to the previous works, we focus on exploiting contextual information in standalone mmWave systems in order to reduce the search space, and thus the overhead of beam alignment operation.  Note that there are other related work to reduce the overhead of beam search in integrated sub-6 GHz-mmWave systems \cite{hashemi2017energy,hashemi2017out}.

\subsection{Multi-Armed Bandit}
Multi-Armed Bandit (MAB) framework formulates sequential decision problems where an agent (i.e., decision maker) has to strike an optimal trade-off between exploitation and exploration by sequentially selecting
an action (or an arm), and observing the corresponding reward. Rewards of a given arm are random variables with unknown distributions. The objective is to maximize the expected reward over a given time horizon by selecting the optimal arm at each time slot.  Most of the existing works focus on \emph{unstructured MAB} problems in which the reward associated with different arms are not related \cite{robbins1985some,lai1985asymptotically}. In contrast to the unstructured MAB models, when the average rewards are structured, deriving the optimal regret bound and designing of optimal decision algorithms is more challenging \cite{bubeck2012regret}. Unimodal bandits are specific instances of bandit models in which the average reward of arms are correlated. In \cite{cope2009regret}, unimodal bandits with a continuous set of arms are studied, and the authors show that the regret of the order of $O(\sqrt{T})$ is achievable under some strong regularity assumptions on the reward functions. 
For the same problem, the authors in \cite{jia2011unimodal} provide an algorithm that achieves $O(\sqrt T \log(T))$ regret under relaxed regularity assumptions. In this paper, we cast the problem of mmWave beam alignment as a unimodal bandit model, and derive the regret bound. 

\section{Model and Objective} 
\subsection{System Model}
In mmWave systems, once the connection is lost, there are two options for connection establishment and subsequent beamforming: digital or analog. 
Digital beamforming is highly efficient in delay such that with the observations from all receive antennas, beamforming can be done by one-shot processing of the observed beacons. However, to achieve digital beamforming, there is the need for a separate analog-to-digital converter (ADC) for each antenna, which may not even be feasible for even a small to mid-sized antenna array due to high energy consumption. On the other hand, while analog beamforming requires only one ADC, it can focus on one direction at a time, making the search process costly in delay. 
Given the fragility of the mmWave channel, the need to scan the entire space leads to the loss of opportunities to utilize the mmWave channel upon its availability.  In order to avoid high energy consumption by mmWave components, we focus on analog beamforming in which a single RF chain is deployed at the transmitter and receiver.  Other implementations (e.g., hybrid architectures) may look at combinations of directions, which is out of scope of this work.
 \begin{figure}[t]
\hspace{-.5cm}
\includegraphics[scale=.3, trim = 1.6cm 1.2cm 2cm 1cm, clip]{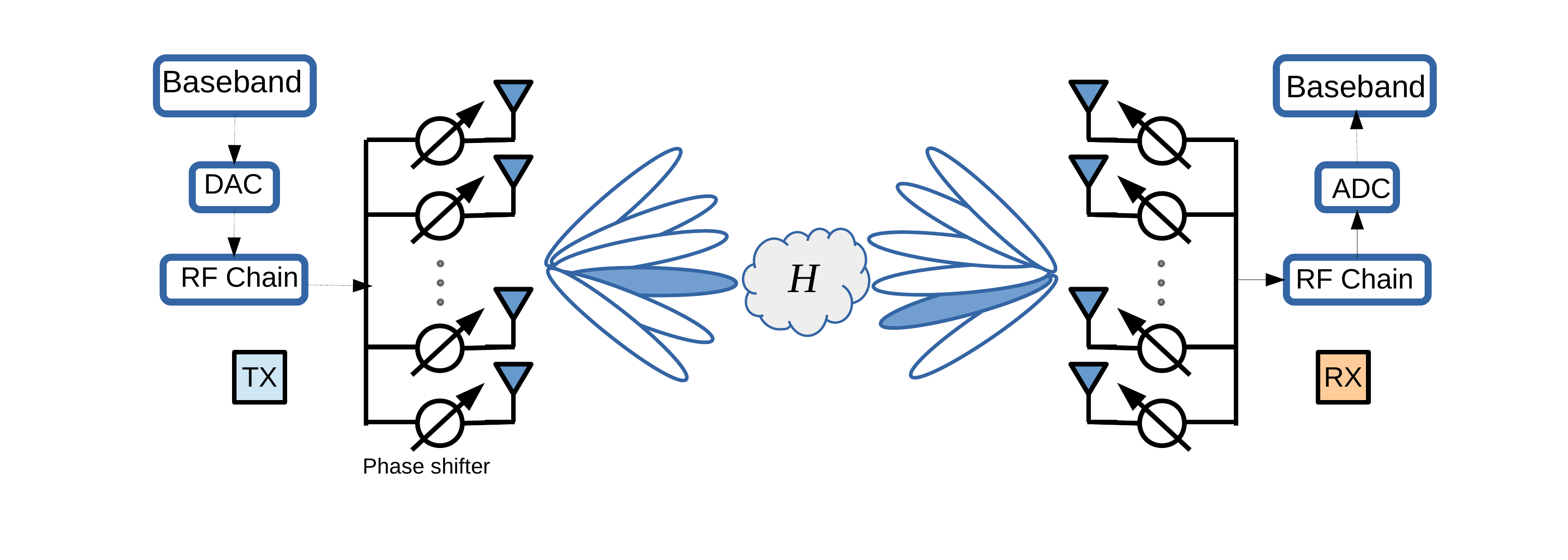}
\caption{Beam alignment using analog beamforming }
\label{fig:mmWave-model}
\end{figure}
Figure \ref{fig:mmWave-model} depicts a schematic of the analog beamforming, and directional beams at the transmitter and receiver. We assume that the transmitter and receiver are equipped with phased array antennas with $M_t$ and $M_r$ identical antennas respectively, equally spaced by a distance $d$ along an axis. For the sake of exposition, we only consider the receiver side, while a similar argument is held for the transmitter. 
%
%
Due to the use of analog architecture, the signal at the input of the decoder is a scalar, identical to a weighted combination of signal $x_{\text{m}}$ across all antennas. 
Thus, the received signal at the mmWave receiver can be written as: 
\begin{equation}
 y_{\text{m}}=\vc{w}_r^\dagger \mathbf{H} \vc{w}_t \cdot x_{\text{m}}+n_{\text{m}},
 \label{received-signal}
\end{equation}
where $\vc{w}_r$ and $\vc{w}_t$ are the beamforming vectors.  
The white Gaussian noise $n_{\text{m}}$ is normalized to have unit variance. 
 If the transmitter uses $N_t$ training precoding vectors $\vc{w}_t$, and the receiver uses $N_r$ training
combining vectors $\vc{w}_r$, then the collected signals (divided through by the training signal) is given by:
$$
\mathbf{Y} = \mathbf{W}_r^\dagger \mathbf{H} \mathbf{W}_t + \mathbf{N},
$$
where $\mathbf{W}_r=[\vc{w}_{r_1}, ..., \vc{w}_{r_{N_r}}]$ is the $M_r\times N_r$ combining matrix, and $\mathbf{W}_t=[\vc{w}_{t_1}, ..., \vc{w}_{t_{N_t}}]$ is the $M_t\times N_t$ precoding matrix. Furthermore, $\mathbf{N}$ is the $N_r \times N_t$ post-processing noise matrix. Hence, at each time slot, the problem of finding the best beam pair boils down to finding the largest value of matrix $\mathbf{Y}$. In the exhaustive search, one should examine all $N_r\times N_t$ elements of  $\mathbf{Y}$ to find the largest index, which determines the optimal beam index at the transmitter and receiver. The authors in \cite{ali2017millimeter} use spatial information extracted at sub-6 GHz to help
estimate the largest index of $\mathbf{Y}$. Applying the same framework, our beam alignment method exploits correlation across the elements of $\mathbf{Y}$ in order to reduce the search space to sub-matrices of $\mathbf{Y}$. 

\subsection{Problem Statement}
In order to establish a mmWave link, the transmitter selects a beam direction that determines the phase shifter weights to steer the beam in a certain direction. Similarly, the receiver selects a receive beam index to receive the signals in a certain direction. To obtain a high beamforming gain, the transmitter and receiver beams should be well aligned with each other. We let $D_{i,j} = (i, j)$ to denote a pair of beam direction in which $i$ is the beam index at the transmitter, and $j$ denotes the receiver beam index. There are $N_t$ and $N_r$ beams at the transmitter and receiver, respectively. Further, we define $\mathcal{D} = \{ (i, j) : 1 \leq i \leq N_t, 1 \leq j \leq N_r\} $ as the set of all possible beam pairs such that there exists $K = N_t \times N_r$ distinct \emph{matching} between the transmit and receive beams. For each pair of transmit and receive beams, \emph{misalignment} is defined as follows. 
  
\begin{definition}\textbf{(Misalignment)}
Given the pair $(i, j) \in \mathcal{D}$ of transmitter and receiver beams, the misalignment $\delta_{i,j}$ captures the angular mismatch between the $i$-th transmitter beam and  $j$-th receiver beam.
\end{definition}
\noindent Set $\mathcal{A} = \{\delta_{i,j}: 1 \leq i \leq N_t, 1 \leq j \leq N_r \}$ contains all possible values of the misalignment values such that $\mathcal{A}$ is partially ordered. With an abuse of notation, we let $\delta_k$ to denote the $k^{th}$ ($k\leq K$) misalignment value. 

\begin{figure}[t]
\includegraphics[scale=.5, trim=2cm 3.1cm 2cm 3.1cm, clip]{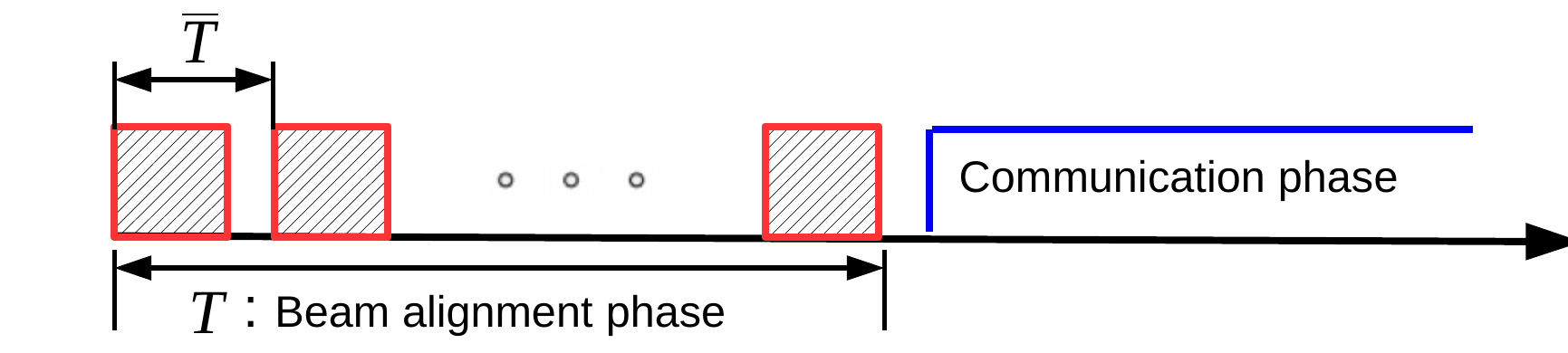}
\caption{{Sequence of beam alignment operations followed by a data communication phase.}}
\label{fig:time-slots}
\end{figure}

In order to detect the transmitted beacon signals, the received energy level should lie above a certain threshold $\tau$, which is determined based on the quality of service (QoS) needed. Once the received energy is larger than $\tau$, we call it a \emph{successful matching.}    For a fixed transmit power, the received signal energy can be expressed as a function of the misalignment value, i.e., $p(\delta_k)$ such that $p(.)$ is non-increasing. For the sake of notations, we use $p_k := p(\delta_k)$. The probability of success for the misalignment $\delta_k$ is given by $\theta_k = \mathds{P}(p_k \geq \tau)$. We assume a fixed situation (e.g., LOS) such that from one run of the alignment algorithm to another,  the situation remains fixed and only the orientations and distances change, i.e., the success probabilities $\theta_k$'s are time-invariant.  Furthermore, we let a binary random variable $X_k$ represent the success ($X_k = 1$) or failure ($X_k = 0$) of the matching. The optimal matching is unique, i.e., there exists $k^*$ such that $p_{k^*} \theta_{k^*} \geq p_{k} \theta_{k}$, for all $k \neq k^*$. 

\textbf{Problem formulation:}{Time is slotted, and we let $T$ to denote the length of beam alignment phase followed by the data communication phase. Further, $\bar{T}$ denotes the length of pilot signal by which each beam is measured. This value captures the amount of time that it takes to examine a single pair of beams, as shown in Fig. \ref{fig:time-slots}.  We formulate the problem of finding the best beam pair as an online stochastic optimization problem such that an optimal beam selection policy $\pi$ maximizes the expected amount of energy received from beacon messages up to a certain finite time $T$.} In this case, we let $s_k^\pi(T)$ denote the number of times that the misalignment $k$ is selected under policy $\pi$ and within the time period $T$.  Therefore, the optimal beam selection policy solves the following optimization problem:
\begin{subequations}
\begin{align}
& \max_\pi \ \  \sum_{k} \mathds{E}[{s_k^\pi(T)}] p_k \theta_k \\
 & \text{s.t.} \ \  \bar{T} \sum_k s_k^\pi(T) \leq T \ \text{and} \  s_k^\pi(T) \in \mathds{N}.
\end{align}
\end{subequations} 
In this formulation, small misalignment and large probability of success is desirable. In addition, given that examining each beam matching takes $\bar{T}$ on average, the total number of beam examinations is upper-bounded by $\frac{T}{\bar{T}}$, which is reflected in the first constraint. The second constraint implies that the number of each matching examination should be an integer.  

\section{Equivalent Multi-Armed Bandit Model}
The beam alignment formulation implies an MAB model such that  each combination of the transmitter and receiver beam is considered as an arm, which leads to $K=N_t\times N_r$ total arms. In this work, we use the terms ``arm'' and ``beam direction'' interchangeably. 
 In this case, $s_k^\pi(T)$ denotes the number of times that arm $k$ has been selected under policy $\pi$. Moreover, the reward of arm $k$ has Bernoulli distribution with parameter $\theta_k$ such that it is $p_k$ with probability $\theta_k$ and $0$ with probability $1-\theta_k$. The average reward of arm $k$ is denoted by $\mu_k := p_k \theta_k$.  

\subsection{Contextual Information} 
We explore a new type of contextual information that correlates the misalignment and the received energy. In particular, due to physics of signal propagation, if  matching at a larger misalignment is successful, a matching at a smaller misalignment will be successful with a high probability. On the other hand, if  matching at a smaller misalignment fails, then a matching at a larger misalignment will fail with a high probability as well. Hence, we have: 
\begin{equation}
\big( \delta_m > \delta_n \ \text{and} \ X_n  = 0 \big) \Rightarrow X_m = 0,
\end{equation}  
\begin{equation}
\big( \delta_m < \delta_n \ \text{and} \ X_n  = 1 \big) \Rightarrow X_m  = 1.
\end{equation} 
Equivalently, we can define the vector of success probabilities to satisfy the following condition: $\boldsymbol{\theta}  = (\theta_1, \theta_2, ..., \theta_K) \in \mathcal{T},
$
where 
$
 \mathcal{T}= \{ \boldsymbol{\theta} \in [0,1]^K: \theta_1 \geq \theta_2 \geq ... \geq \theta_K\}$. 
 In addition to the correlation property, we note that amount of energy received can be approximated as a unimodal function of misalignment. In this case, $\boldsymbol{\theta} \in \mathcal{U}$ such that $\mathcal{U} = \{\boldsymbol{\theta} \in [0,1]^K: \exists k^*, p_1 \theta_1 < ... < p_{k^*}\theta_{k^*},  p_{k^*}\theta_{k^*} > p_{k^*+1}\theta_{k^*+1} > ...> p_{K}\theta_{K} \}$. \blue{Note that a similar unimodal model has been used for other applications such as the rate adaptation in 802.11 systems \cite{combes2014optimal} and channel selection in cognitive networks \cite{combes2015dynamic}.}

\textbf{Graph representation:} In order to demonstrate the implications of contextual information, we can utilize a graph representation to capture the order of arms (beam pairs) with respect to each other. In this model, each arm corresponds to a node of a graph and each edge is associated with a relationship specifying which node of the edge gives the largest expected reward, thus providing a partial ordering over the arm space. Furthermore, from any node there is a
path leading to the unique node with the maximum expected
reward along which the expected reward is monotonically increasing. Under the assumption of unimodal expected reward, we can move from low expected rewards to high ones just by climbing them in the graph, preventing the need of a uniform exploration over all the graph nodes. This assumption reduces the complexity in the search for the optimal arm, since the optimal policy can avoid pulling the arms corresponding to some subset of non-optimal nodes.  
 
\textbf{Experimental observations:} {For the purpose of illustration, we provide experimental results to observe the typical propagation pattern as a function of misalignment. In particular, we consider the case of clear line of sight, and run a set of experiments in which two horn antennas placed on tripods and facing each other symmetrically. The transmitter antenna is set to be the stationary antenna, while the receiver antenna is rotated throughout the experiment. Two software defined radio (NI USRP-2901) are set up as the transmitter and receiver at $4$ GHz. The transmit antenna is connected to an up-converter with the output at $40$ GHz. The receiver antenna takes the $40$ GHz carrier and sends it to the down-converter with output of $4$ GHz. A power spectrum of gain (dB) vs. frequency is displayed in real time for data collection, and then we record the average peak value of gain. The receive horn antenna sweeps $2$ degrees incrementally on both clockwise and counter-clockwise directions until the gain is indistinguishable from the thermal noise floor (about $85$ dB).}
\begin{figure}[t!]
  \centering
  \subfigure[Two horn antennas at $40$ GHz placed on tripods]{\includegraphics[width=1.6in, height=2in, trim=1cm .1cm 1cm .75cm, clip]{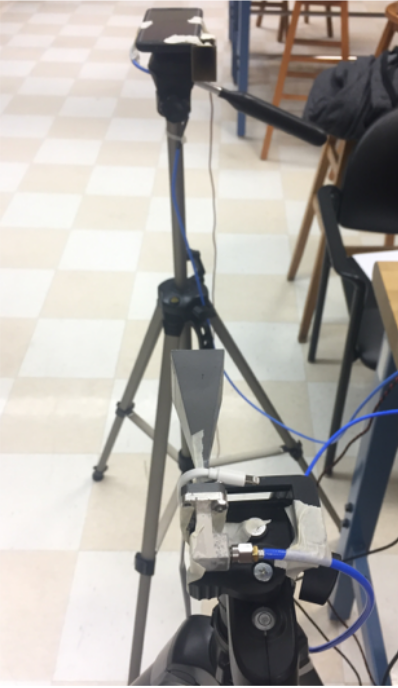}
  \label{antenna-pic}
  } \quad
  \subfigure[Connection of antenna with the oscillator, mixer and USRP]{\includegraphics[width=1.6in, height=2in, trim=1cm .1cm 1cm .75cm, clip]{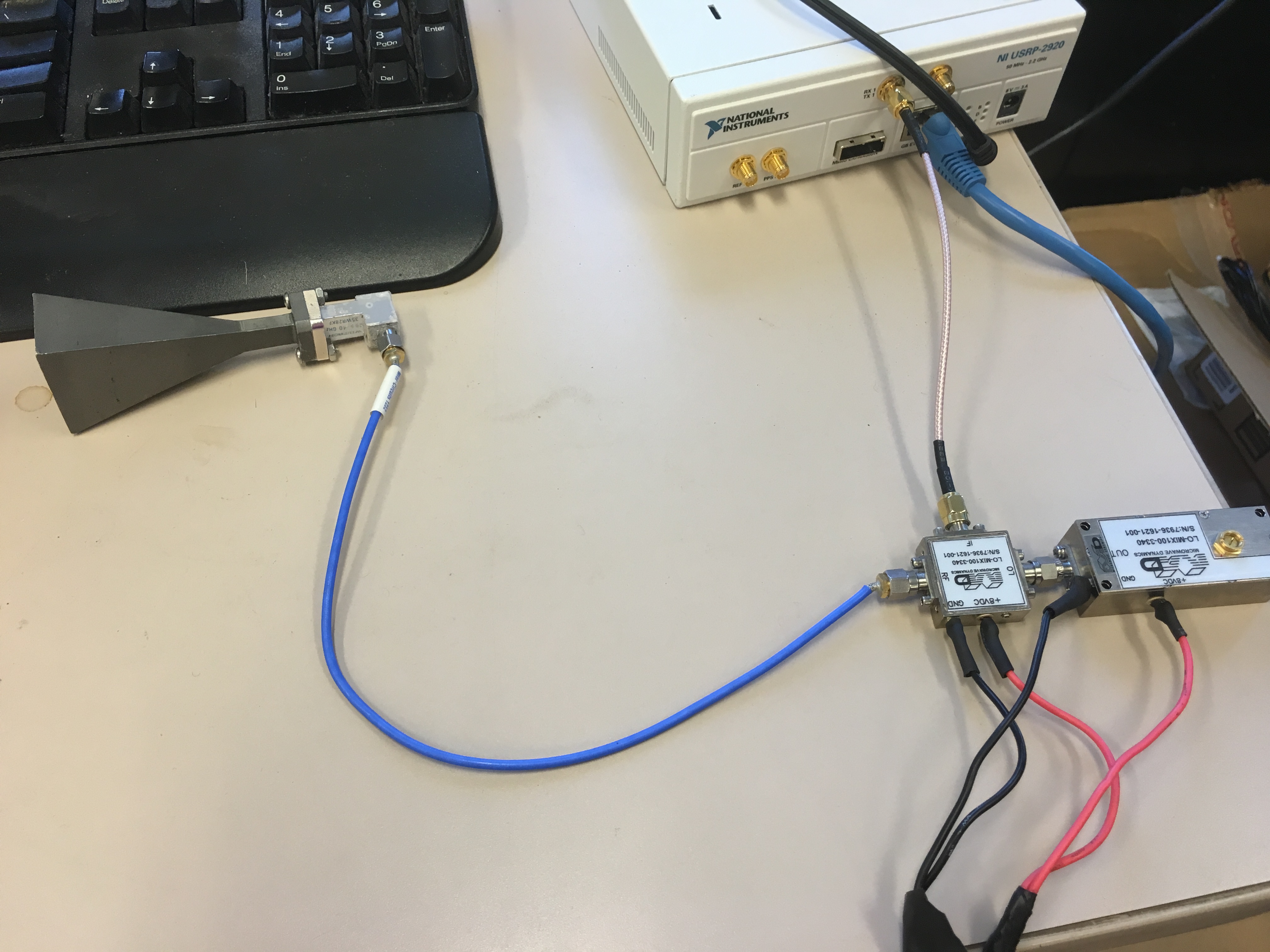}
  \label{connection-pic}
  }
 \caption[=]{Experimental setup connection and equipment}
  \label{fig:setup}
\end{figure}
\begin{figure}[t!]
\centering
\includegraphics[scale=.22, trim = 0cm 0cm 0cm 0cm clip]{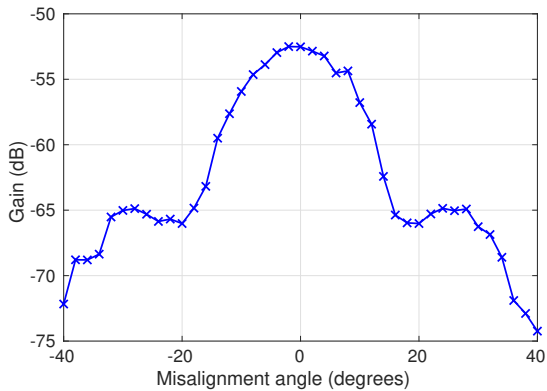}  
  \caption{Received power at distance of $1$ meter vs. misalignment. }
  \label{fig:experimental-unimodal}
\end{figure}
Our experimental setup is shown in Fig. \ref{fig:setup}, and Fig. \ref{fig:experimental-unimodal} demonstrates the received power as a function of misalignment angle between the transmitter and receiver antennas. 
We observe that received power approximately (due to the sidelobes) follows the unimodal pattern. In this work, we assume that the transmitter and receiver deploy highly directional antenna arrays in which the effect of sidelobes is negligible and thus the received power can be approximated as a unimodal function. 
Moreover, although we would expect the misalignment of $0$ degree provides the highest signal energy, under blockage and reflection scenarios a misalignment of $\delta \neq 0$ may provide a larger gain. Our model on contextual information is general and captures these conditions as well.




\section{Performance Analysis and Optimal Algorithm}

\subsection{Regret Analysis}
In order to assess the performance of policy $\pi$ for beam alignment, or equivalently arm selection, we consider \emph{regret} as the performance metric,  defined as follows:
\begin{equation}
\mathcal{R}^{\pi}(T) = p_{k^*} \theta_{k^*}  T - \sum_{k}  \mathds{E}[s_k^{\pi}(T)] p_k \theta_k.
\label{eq:regret-definition}
\end{equation}
From this definition, \emph{regret measures the expected reward loss (over a time period of $T$) compared with an oracle policy that would know everything.} In the case that the expected reward of the various arms are not correlated,  regret of the best algorithm is in the form of $O(K \log(T))$ in which $K$ is the number of arms \cite{lai1985asymptotically}. As a result, regret scales linearly with the number of arms. In beam alignment, for a beam of a few degrees the total number of arms (i.e., all combinations of transmitter and receiver beam pairs) becomes very large.  Therefore, in order to avoid the scaling factor, we exploit the structural properties of the arms and their reward functions, and show that due to contextual information, the scaling factor is constant and does not scale with the number of beam matchings (i.e., size of the decision space). To this end, for any arm $k$, we denote the set of its neighbors by:
$$
N(k) = \{j \in \{k-1, k+1\}: p_k \theta_k \leq p_j \}. 
$$
Furthermore,  given that the arms reward follow Bernoulli distribution, the Kullback-Leibler (KL) divergence of two Bernoulli distributions with respective parameters $\theta$ and $\theta^*$ is defined as: $I(\theta,\theta^*)=\theta\log \frac{\theta}{\theta^*} + (1-\theta) \log \frac{1-\theta}{1-\theta^*}$.  \blue{It has been shown in \cite{combes2014optimal,combes2014unimodal} that the problem of
learning in a unimodal bandit setting presents a lower bound over the regret of the following form:}
%

\begin{theorem}
For any beam alignment algorithm $\pi$, the lower bound on the regret is given by: 
\blue{
$$
\liminf_{T \rightarrow \infty} \frac{\mathcal{R}^{\pi} (T)}{\log(T)} \geq c(\theta),
$$}
in which $c(\theta)$ is a function of arms reward, and is given by:
\begin{equation}
c(\theta) = \sum_{k\in N(k^*)} \frac{p_{k^*}\theta_{k^*} - p_k \theta_k}{I\left(\theta_k, \frac{p_{k^*}\theta_{k^*}}{p_k}\right)}.
\label{eq:regert-scale}
\end{equation}
\end{theorem}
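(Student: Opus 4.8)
The plan is to follow the classical change-of-measure argument of Lai and Robbins, specialized to the structured (unimodal) setting as in the cited works of Combes and Proutiere. First I would restrict attention to \emph{consistent} (uniformly good) policies, i.e. those for which $\mathcal{R}^{\pi}(T) = o(T^{a})$ for every $a>0$; without such a regularity assumption no nontrivial lower bound can hold, since a policy hard-wired to always pull $k^*$ would have zero regret on this particular instance. The starting point is the exact regret decomposition implied by \eqref{eq:regret-definition}: writing $\mu^{*} = p_{k^{*}}\theta_{k^{*}}$ and using the budget identity $\sum_{k} s_{k}^{\pi}(T) = T$ (one examination per slot), we get
$$
\mathcal{R}^{\pi}(T) = \sum_{k} \mathds{E}[s_{k}^{\pi}(T)]\,(\mu^{*} - \mu_{k}),
$$
so it suffices to lower bound $\mathds{E}[s_{k}^{\pi}(T)]$ for the suboptimal arms that a good policy is \emph{forced} to sample.

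The heart of the argument is a per-arm lower bound restricted to the neighbors of $k^{*}$. Fix $k \in N(k^{*})$ and construct a confusing instance $\boldsymbol{\lambda}$ that agrees with $\boldsymbol{\theta}$ in every coordinate except the $k$-th, where I replace $\theta_{k}$ by $\theta_{k}' = \tfrac{\mu^{*}}{p_{k}} + \epsilon$ for small $\epsilon>0$. By construction $p_{k}\theta_{k}' > \mu^{*}$, so arm $k$ becomes the unique optimal arm under $\boldsymbol{\lambda}$. Two structural checks make this legitimate. First, this perturbation is \emph{feasible as a probability} only when $\tfrac{\mu^{*}}{p_{k}} \le 1$, i.e. $\mu^{*} \le p_{k}$ — which is precisely the condition defining membership of $k$ in $N(k^{*})$; a neighbor with $p_{k} < \mu^{*}$ can never be made optimal by raising $\theta_{k}$ alone and hence need not be explored. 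Second, $\boldsymbol{\lambda}$ must remain in the unimodal set $\mathcal{U}$: because $k$ is a neighbor of the peak, raising only its mean past $\mu^{*}$ keeps the profile single-peaked (the peak simply shifts from $k^{*}$ to $k$), whereas the same single-coordinate perturbation at a \emph{non}-neighbor would create a second local maximum and leave $\mathcal{U}$. This is exactly why the sum in $c(\theta)$ ranges only over $N(k^{*})$.

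With the confusing instance in hand, I would invoke the standard information-theoretic lemma underlying the Lai--Robbins bound (in the form of Kaufmann--Capp\'e--Garivier): since $\boldsymbol{\theta}$ and $\boldsymbol{\lambda}$ differ only on arm $k$, the total log-likelihood divergence accumulated along any sample path equals $\mathds{E}[s_{k}^{\pi}(T)]\,I(\theta_{k},\theta_{k}')$, and consistency of $\pi$ on \emph{both} instances (which have different optimal arms) forces
$$
\liminf_{T\rightarrow\infty}\frac{\mathds{E}[s_{k}^{\pi}(T)]}{\log T} \geq \frac{1}{I(\theta_{k},\theta_{k}')}.
$$
Letting $\epsilon \to 0$ gives $\theta_{k}' \to \mu^{*}/p_{k}$, and by continuity of the Bernoulli KL divergence the denominator converges to $I\!\left(\theta_{k},\tfrac{\mu^{*}}{p_{k}}\right)$. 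Substituting these per-neighbor bounds into the regret decomposition and discarding the nonnegative non-neighbor terms yields
$$
\liminf_{T\rightarrow\infty}\frac{\mathcal{R}^{\pi}(T)}{\log T} \geq \sum_{k \in N(k^{*})} \frac{\mu^{*} - \mu_{k}}{I\!\left(\theta_{k},\tfrac{p_{k^{*}}\theta_{k^{*}}}{p_{k}}\right)} = c(\theta),
$$
which is the claim.

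I expect the main obstacle to be twofold. The delicate structural point is the unimodality bookkeeping: simultaneously verifying that each single-neighbor perturbation stays inside $\mathcal{U}$ and arguing that non-neighbor arms impose \emph{no additional binding constraint} — equivalently, that the Graves--Lai lower-bound linear program decouples across arms and is minimized by placing exploration mass only on $N(k^{*})$. The second, more technical, obstacle is making the change-of-measure step fully rigorous: the statement ``consistency forces enough sampling'' requires the careful inequality relating the expected log-likelihood ratio, the KL divergence, and the probability that the policy's empirical best arm is wrong, together with the limit interchange as $\epsilon \to 0$ and justification that the denominators $I(\theta_{k},\mu^{*}/p_{k})$ are strictly positive and finite so that the bound is meaningful.
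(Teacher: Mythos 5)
The paper never proves this theorem itself: it imports the lower bound directly from the cited works \cite{combes2014optimal,combes2014unimodal} on unimodal bandits, and its own appendix only proves the upper bound (Theorem 2). Your proposal correctly reconstructs the argument behind that citation — the Lai--Robbins/Graves--Lai change-of-measure technique, with the confusing instance obtained by raising a single coordinate $\theta_k$ to $p_{k^*}\theta_{k^*}/p_k + \epsilon$, and with the crucial structural observation that such a single-coordinate perturbation stays inside $\mathcal{U}$ only for neighbors of the peak and is feasible as a probability only when $p_k \geq p_{k^*}\theta_{k^*}$, which is exactly the paper's definition of $N(k^*)$ and explains why $c(\theta)$ sums over that set alone. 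You also correctly flag that the statement ``for any beam alignment algorithm $\pi$'' must be restricted to uniformly good (consistent) policies for the bound to be true — a qualification present in the cited references but silently dropped in the paper's phrasing.
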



From \eqref{eq:regert-scale}, we observe that $c(\theta)$ is equal to a summation over a constant number of terms (i.e., independent of $K$). On the other hand, in the case that the structural properties is not exploited, the regret is lower bounded by:
$$
c'(\theta) = \sum_{k\neq k^*} \frac{p_{k^*}\theta_{k^*} - p_k \theta_k}{I\left(\theta_k, \frac{p_{k^*}\theta_{k^*}}{p_k}\right)},
$$ 
where $c'(\theta)$ linearly increases
with the number of possible arms, or equivalently, number of beam pairs at the transmitter and receiver. 


\subsection{Unimodal Beam Alignment (UBA) Algorithm} 
\blue{Next, we consider an algorithm whose regret
matches the lower bound given in Theorem 1. The first part of this algorithm is identical to the OSUB algorithm proposed in \cite{combes2014unimodal}, that we briefly describe here.} This  algorithm is asymptotically optimal, and is based on UCB algorithm that uses the KL divergence as an index for arm. In particular, each arm is attached an index that resembles the KL-UCB index, but the arm selected
at a given time is the arm with maximal index within the \emph{neighborhood} of the arm that yields the highest empirical reward. Let $k(t)$ be the arm selected at time $t$, and $s_k(t)$ denote the number of times arm $k$ has been selected up to time $t$. The empirical reward of arm $k$ at time $t$ is: 
\begin{equation}
 \hat{\mu}_k(t) = \left\{
  \begin{array}{l l}    
 \frac{\sum_{n=1}^t \mathds{1}\{k(n) = k\} p_k X_k(n)}{s_k(t)}  & \text{if} \ s_k(t) \neq 0 , \\ \vspace{.1cm}  
  
0 & \text{otherwise}. \vspace{.14cm} \\  

  \end{array} \right.
  \label{kkt-low-SNR}
\end{equation}
At any time slot $t$, we denote by $L(t) = \argmax_{1\leq k \leq K} \hat{\mu}_k(t)$ the index of the arm with the highest empirical reward. $L(t)$ is referred to as the \emph{leader} at time $t$.   Further, we define $l_k(t) = \sum_{n=1}^t \mathds{1}\{L(n) = k\}$ the number of times that arm $k$ has been the leader up to time $t$. Now, the index of arm $k$ at time $t$ is defined as:
\begin{align}
b_k(t) = \sup\bigg\{q\in [0, p_k] :  I \left(\frac{\hat{\mu}_k(t)}{p_k}, \frac{q}{p_k}\right) \leq \frac{f(t)}{s_k(t)}\bigg\},
\end{align}
in which $f(t) =\log (l_{L(t)}(t)) + c \log(\log(l_{L(t)}(t)))$ and $c$ is a positive number. At any time slot, the algorithm selects the arm ``close'' to arm $L(t)$ and with the maximum index. Next, we provide the finite time analysis of this algorithm, \blue{noting that the authors in \cite{combes2014optimal,combes2014unimodal} have presented similar results.}



\begin{theorem}
Let fix $\theta \in \mathcal{T} \cap \mathcal{U}$. For all $\epsilon $, the regret under the proposed UBA policy and at time $T$ is bounded by:
$$
\mathcal{R}(T) \leq (1+\epsilon) \sum_{k\in N(k^*)} \frac{p_{k^*}\theta_{k^*} - p_k\theta_k}{I\left(\theta_k, \theta_{k^*}\right)} \log(T). 
$$
\end{theorem}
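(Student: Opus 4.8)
The plan is to adapt the finite-time analysis of the OSUB algorithm from \cite{combes2014unimodal} to the present reward model, where arm $k$ yields reward $p_k X_k$ with $X_k$ Bernoulli of parameter $\theta_k$ and the confidence index $b_k(t)$ is built on the success probability rather than on the mean reward. First I would write the regret in pulling form, $\mathcal{R}(T)=\sum_k \mathds{E}[s_k(T)]\,(\mu_{k^*}-\mu_k)$ with $\mu_k=p_k\theta_k$, so that it suffices to control $\mathds{E}[s_k(T)]$ for each suboptimal $k$. I would then split each count according to the current leader,
$$
\mathds{E}[s_k(T)] = \mathds{E}\Big[\textstyle\sum_{t=1}^{T}\mathds{1}\{k(t)=k,\,L(t)=k^*\}\Big] + \mathds{E}\Big[\textstyle\sum_{t=1}^{T}\mathds{1}\{k(t)=k,\,L(t)\neq k^*\}\Big],
$$
separating the ``correct-leader'' regime, in which the algorithm only explores inside $N(k^*)\cup\{k^*\}$, from the ``wrong-leader'' regime.

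For the wrong-leader term I would show that the leader is eventually $k^*$ in all but $O(1)$ rounds, i.e. $\sum_{t=1}^{T}\mathds{P}(L(t)\neq k^*)=O(1)$, so this contributes only a constant to the regret. The argument exploits the unimodality $\boldsymbol{\theta}\in\mathcal{U}$ together with a sampling-frequency fact: whenever an arm $j$ is the leader, $j$ itself is played a positive fraction of its leadership rounds (a pigeonhole argument inside the finite neighbourhood $N(j)$), so $s_j(t)$ grows with $l_j(t)$. A Chernoff/Hoeffding bound on the empirical means $\hat\mu_\cdot(t)$ then shows that a suboptimal $j$ cannot durably overtake the neighbour lying on the monotone path toward $k^*$; summing the resulting deviation probabilities over $t$ yields a convergent series.

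For the correct-leader term I would invoke the KL-UCB machinery restricted to the neighbourhood. When $L(t)=k^*$, arm $k\in N(k^*)$ is played only if $b_k(t)\geq b_{k^*}(t)$; since $b_{k^*}(t)\geq\mu_{k^*}$ holds with high probability (a peeling/maximal-inequality argument on the deviations of $\hat\mu_{k^*}$), the over-pulling of $k$ reduces to counting the rounds in which $b_k(t)\geq\mu_{k^*}$ while $s_k(t)$ is already large. Unwinding the definition of $b_k(t)$ and using that $I$ is increasing in its second argument, such an event forces the empirical success rate $\hat\mu_k(t)/p_k$ (which converges to $\theta_k$) to lie within $f(t)/s_k(t)$ in KL-divergence of the target success level; since $f(t)\sim\log t$, the standard KL-UCB bound then caps the count at $(1+\epsilon)\log T / I(\theta_k,\theta_{k^*})$ up to lower-order terms. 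Multiplying by the gap $\mu_{k^*}-\mu_k$ and summing over $k\in N(k^*)$ delivers the stated bound.

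The main obstacle is the leader-concentration step, since it is precisely where unimodality is converted into the key fact that only the $|N(k^*)|$ neighbours require non-trivial exploration, making the bound independent of $K$. Two points make it delicate: the sampling of a leader's neighbourhood is governed by the data-dependent threshold $f(t)$, which involves $l_{L(t)}(t)$, the number of times the current leader has led; and the index is expressed through the success probability, so the concentration inequalities must be applied to $\hat\mu_k(t)/p_k$ as an estimator of $\theta_k$ and then transported back through the $p_k$-scaling. Verifying that the induced deviation series converges uniformly over the choice of leader is the technical crux.
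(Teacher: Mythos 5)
Your proposal is correct and follows essentially the same route as the paper's proof: the same leader-based decomposition into rounds with $L(t)=k^*$ (handled by the KL-UCB analysis of \cite{cappe2013kullback} restricted to $N(k^*)$, yielding the $(1+\epsilon)\log T/I(\theta_k,\theta_{k^*})$ terms) and rounds with $L(t)\neq k^*$ (shown to contribute only $O(1)$ via concentration of the empirical means and the fact that a wrong leader is sampled often enough, which the paper implements with Chernoff--Hoeffding and Bernstein inequalities). Your additional remarks about the $p_k$-scaling of the index and the data-dependent threshold $f(t)$ are legitimate technical points that the paper's proof glosses over, but they do not change the structure of the argument.
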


\begin{proof}
Proof is provided in Appendix A. 
\end{proof}

\begin{algorithm}[t]
\caption{Unimodal Beam Alignment (UBA)}
\begin{varwidth}{\dimexpr\linewidth-2\fboxsep-2\fboxrule\relax}
\begin{algorithmic}[1]
\State At time slot $t \geq 1$, select the beam direction with index $k(t)$ where:
\begin{equation}
 k(t) = \left\{
  \begin{array}{l l}    
 L(t)  & \text{if} \ \frac{l_{L(t)}(t) -1 }{\gamma+1} \in \mathds{N} , \\ 
  
\displaystyle\argmax_{k \in N(L(t))} b_k(t) & \text{otherwise}.
  \end{array} \right.
\end{equation}
\blue{\State Evaluate the ratio 
$$
\psi(t) = \frac{p_{k(t)}}{\frac{1}{t}\sum_{n=1}^t p_{k(n)}},
$$
\If {$\psi(t) \geq \Psi$}
\State Terminate the beam search
\State Proceed to the communications phase with the beam index $k$ 
\EndIf} 
\end{algorithmic}
\end{varwidth}%
\end{algorithm}
 
In order to guarantee a finite time running of this algorithm, we add an additional termination condition in Algorithm 1 and continue with the data communication phase thereafter. We use the \textbf{peak-to-average ratio} as the termination condition in order to detect when the best beam direction is found.  Therefore, as the UBA algorithm proceeds, we evaluate the ratio of the received energy to the average of previously received signals energy. If the ratio is higher than a threshold $\Psi$, we terminate the UBA algorithm and declare the beam as the best beam direction. Specifically, at time slot $t$, we calculate the peak-to-average ratio as follows:
$$
\psi(t) = \frac{p_{k(t)}}{\frac{1}{t}\sum_{n=1}^t p_{k(n)}},
$$
in which $p_{k(t)}$ denotes the energy level of beam direction selected at time $t$. Therefore, when the condition $\psi(t)~\geq~\Psi$ is satisfied, we declare the beam with index $k(t)$ as the optimal direction, and the UBA algorithm stops at time $t$. 
The authors in \cite{nitsche2015steering} have experimentally evaluated the peak-to-average ratio for  LOS and NLOS situations such that $\Psi = 4$ is acceptable for detecting LOS. It should be noted that the proposed UBA scheme does not rely on the existence of LOS scenarios, while the threshold $\Psi$ can be different under various environmental conditions. In particular, environmental conditions (e.g., blockage or reflection) alter the success probability of beam matchings, while the proposed UBA is oblivious to the underlying ``physical layer" condition. In fact, based on the past observations, the UBA biases the search space towards the best beam direction.  Therefore, the transmitter and receiver are able to refine the search space through successive rounds of beam alignment. The pseudocode is provided in Algorithm 1 where $\gamma = 2$ is the maximum degree of the graph representing the relation between arms. 

In order to provide a complete beam alignment algorithm, similar to the IEEE 802.11ad standard, we decouple the transmitter and receiver steering such that the transmitter starts with a quasi-omnidirectional beam, while the receiver uses the UBA algorithm (instead of exhaustive search) to find the best beam direction. The process is then reversed to have the transmitter scan the space while keeping the receiver quasi-omnidirectional. As a result, we enhance the 802.11ad standard beam alignment by using the UBA algorithm instead of exhaustive search. 


%

\section{Numerical Results}

\subsection{Setup}
We compare the performance of the UBA algorithm with the exhaustive search scheme in which the receiver scans all different directions and samples the beam in all directions. The combination of transmitter and receiver beams that delivered the maximum power is picked as the direction of the signal. We perform the comparison of the UBA algorithm with the exhaustive search method under two different scenarios: \emph{directional} and \emph{quasi-directional}. Under directional conditions, probability of success is either very high or very low (e.g., in LOS scenarios).  On the other hand, quasi-directional scenario occurs when the variance of success probabilities is smaller than directional situation (e.g., NLOS conditions).  We evaluate the performance of beam alignment in terms of regret that measures the performance loss compared with the optimal alignment (i.e., an oracle policy). In this case, a lower regret implies a higher amount of received energy, and thus a higher accuracy in beam alignment. We also compare the beam alignment accuracy and delay overhead when the termination condition of peak-to-average ratio is used. In simulations, we fix the transmitter beam direction, and the receiver scans the angular domain to find the optimal beam direction.

\subsection{Regret Performance}
We set the vector of success probabilities as follows: 
$$
\boldsymbol{\theta}_{\text{Directional}}~=~(0.99, 0.98, 0.96, 0.93, 0.9, 0.1, 0.06, 0.04);
$$
$$\boldsymbol{\theta}_{\text{Quasi-directional}} = (0.95, 0.9, 0.8, 0.65, 0.45, 0.25, 0.15, 0.1).$$
Figure \ref{fig:regret} demonstrates the regret of the UBA method compared with the exhaustive beam sampling method under the directional and quasi-directional scenarios with $8$ beam directions. From the results, we observe that the regret increases over time since compared with an oracle policy, the \textbf{total performance loss} keeps increasing. However, the regret curve is concave and its rate of increase, decreases with time (i.e., error decreases). In addition, exploiting the structural properties using the UBA algorithm greatly reduces the regret that is equivalent to a higher amount of received energy. This implies a higher beam alignment accuracy that is proportional with the received energy. In addition, both methods achieve a lower regret under the the directional scenario, as expected.

\subsection{Scaling with the Size of System}
Due to recent advances in antenna technologies, large directional antenna arrays with much smaller form factors can be deployed in relatively small chip areas. As a result, spatial resolution and number of the beams can be very large at the transmitter and receiver. Within this context, we investigate the effect of number of beam pairs on the performance of UBA. Figure \ref{fig:regret-5} demonstrates the regret  metric for $K=8$ and $K=16$ beam pairs.  From the results, we observe that the performance of UBA scheme does not degrade with the number of beams that is a function of the number of antennas at the transmitter and receiver. This is a crucial property in massive antenna systems. Similar to Fig. \ref{fig:regret}, UBA scheme achieves a better performance compared with the exhaustive beam sampling method for both $K=8$ and $K=16$ beam pairs.

\subsection{Beam Alignment Accuracy and Delay Overhead}
Next, we investigate the accuracy and delay overhead of the proposed UBA algorithm combined with the peak-to-average ratio termination condition. We set the number of beams to be equal to $8$ beams at the receiver, and the goal is to find the best beam  (e.g., misalignment angle of zero). Using the exhaustive search method, $8$ time slots is needed to examine all beams and pick the one with the highest received energy. 
This method is deterministic in a sense that the output is correct with the guaranteed delay overhead of $8$ slots. On the other hand, our method finds the optimal beam direction with a high probability while its delay overhead is smaller than the exhaustive method.
  \begin{figure}[t!]
\centering
\includegraphics[scale=.21, trim = 1cm 0cm 1cm 1.2cm, clip]{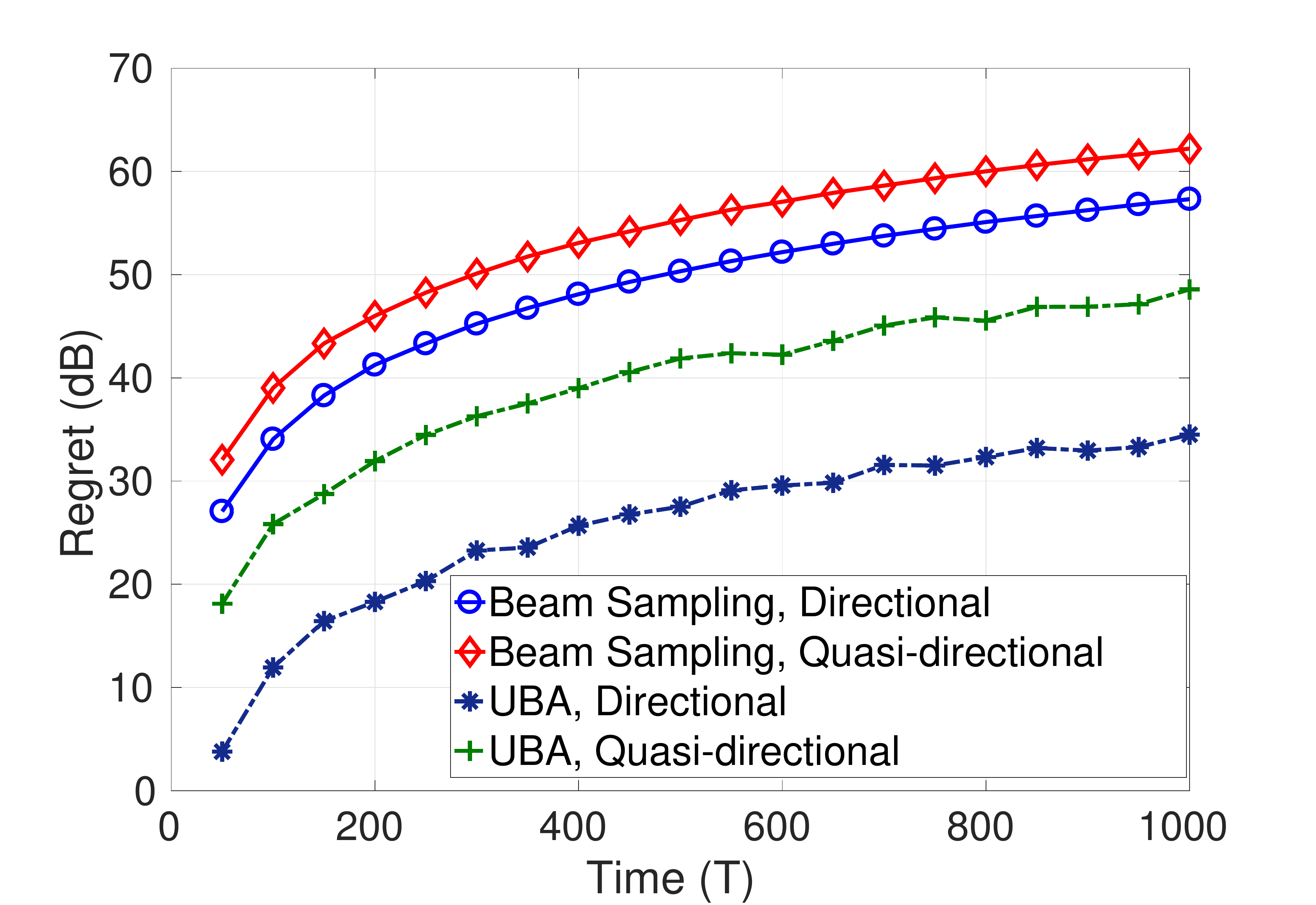}
\caption{{Regret of UBA and exhaustive beam search with $8$ beams.}}
\label{fig:regret}
\end{figure} 
  \begin{figure}[h!]
\centering
\includegraphics[scale=.21, trim = 1cm 0cm 1cm 1.3cm, clip]{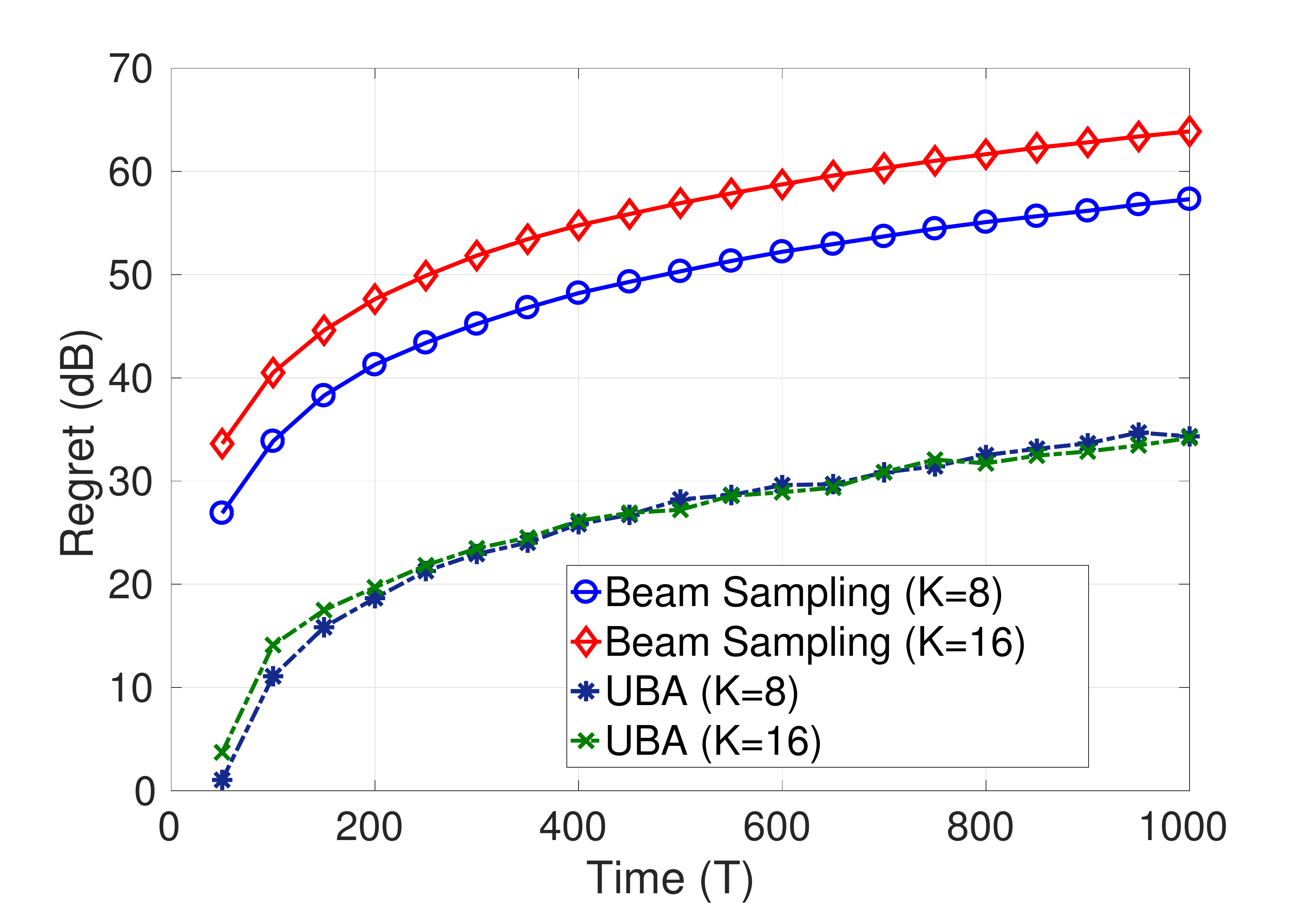}
\caption{{Regret of UBA and the exhaustive beam search with $8$ and $16$ beams. }}
\label{fig:regret-5}
\end{figure}
We set $\Psi = 4$ \cite{nitsche2015steering}, and consider a scenario in which beam alignment success probability for the optimal beam is relatively small, i.e, $\boldsymbol{\theta}~=~(0.8, 0.5, 0.35, 0.3, 0.25, 0.2, 0.15, 0.1)$. In this case, Fig. \ref{fig:accuracy} reports the CDF of the optimal beam detection. From the results, we observe that in more than $85\%$ of iterations, we correctly predict the optimal beam direction. 
The important point, however, is that our method significantly reduces the delay overhead. Figure \ref{fig:scatter} depicts the scatter plot for detecting each beam as the optimal vs. the amount of time it takes. Size of each scatter point represents density of data. From the results, we observe that most of the beam alignment operations lead to beam 1 (i.e., high accuracy) with delay of less than $5$ time slots (i.e., low overhead). Figure \ref{fig:delay-optimal-beam} also shows the CDF of delay overhead in detecting beam 1 as the optimal beam. We extend the simulation to $128$ receiver beams. From the results shown in Fig. \ref{fig:accuracy-delay-128}, we observe that delay overhead is significantly improved at the cost of some error in detecting the optimal beam direction. In fact, the delay overhead of $128$ time slots (to examine each direction) is reduced to $12.5$ time slots (averaged over 1000 iterations).

\section{Conclusion}
In this paper, we investigated the beam alignment problem in mmWave systems where the transmit and receive antenna arrays require 
to frequently find the optimal beam pair that maximizes the received energy from beacon messages.
   \begin{figure*}[t]
  \centering
\subfigure[UBA accuracy to detect beam index 1]{\includegraphics[scale=.22, trim=0cm .1cm 1cm .75cm, clip]{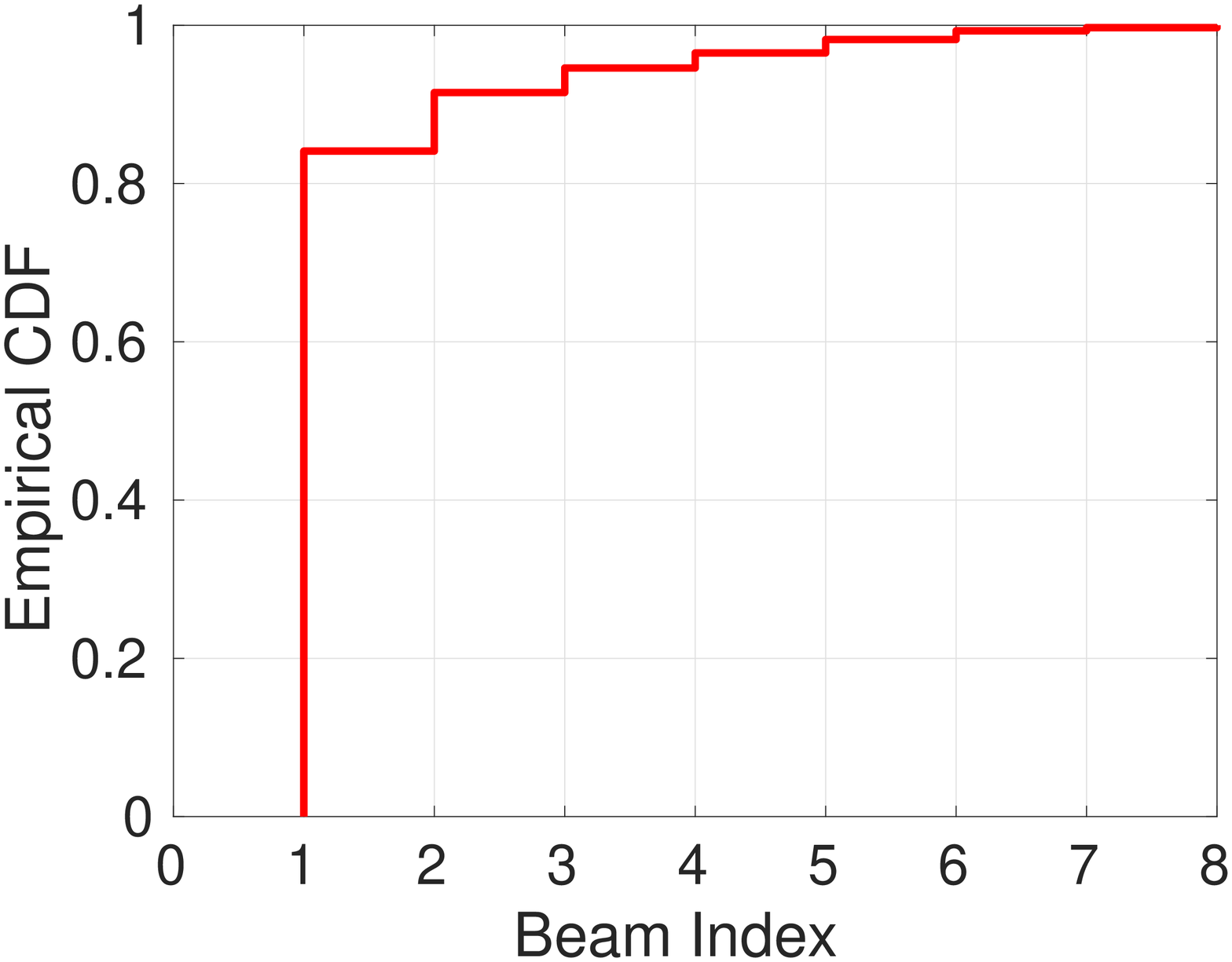}
  \label{fig:accuracy}
  } \quad
  \subfigure[Detected beam with its delay overhead]{\includegraphics[scale=.22, trim=.5cm .1cm 1cm .75cm, clip]{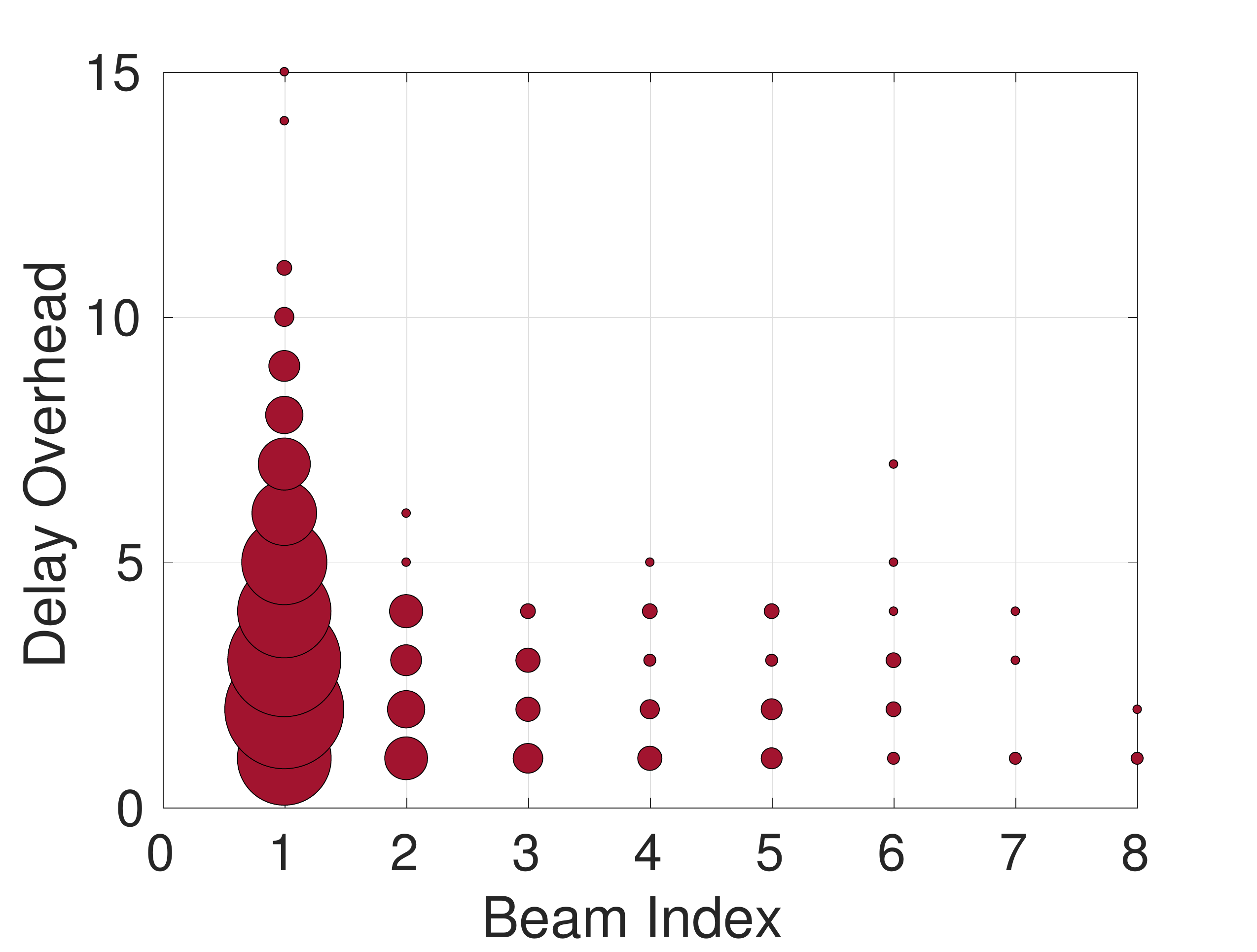}
  \label{fig:scatter}
  }
  \quad
  \subfigure[CDF of delay overhead to find the optimal beam]{\includegraphics[scale=.22, trim=0cm .1cm 1cm .75cm, clip]{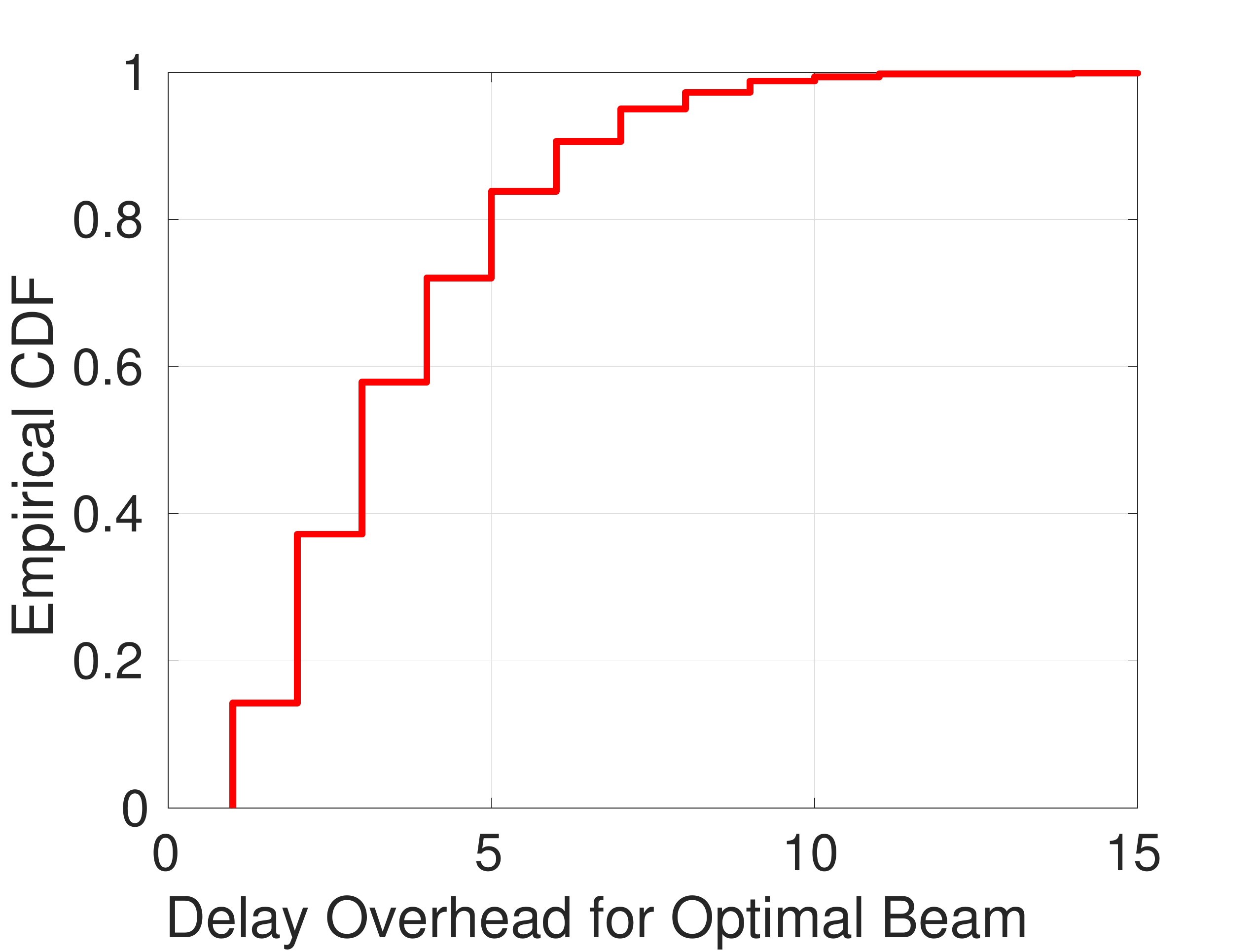}
  \label{fig:delay-optimal-beam}
  }
 \caption[=]{Accuracy and delay overhead of the proposed UBA algorithm with the detection threshold $\Psi = 4$ and \textbf{$\mathbf{8}$ beams.}}
  \label{fig:accuracy-delay}
\end{figure*} 
  \begin{figure*}[t]
  \centering
\subfigure[UBA accuracy to detect beam index 1]{\includegraphics[scale=.195, trim=.65cm .1cm 1cm .75cm, clip]{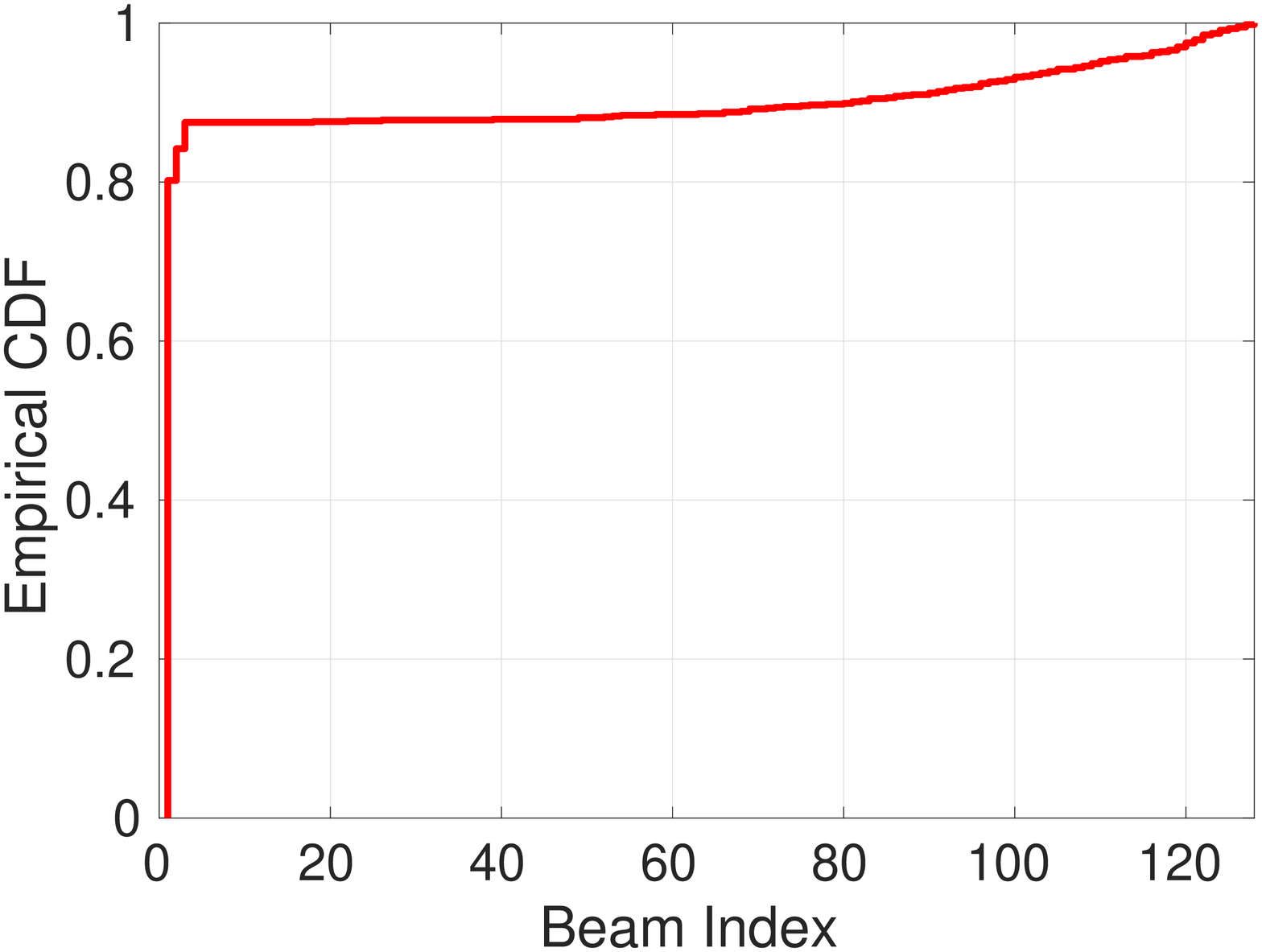}
  \label{fig:accuracy-128}
  } \quad
  \subfigure[Detected beam with its delay overhead]{\includegraphics[scale=.195, trim=.5cm .1cm 1cm .75cm, clip]{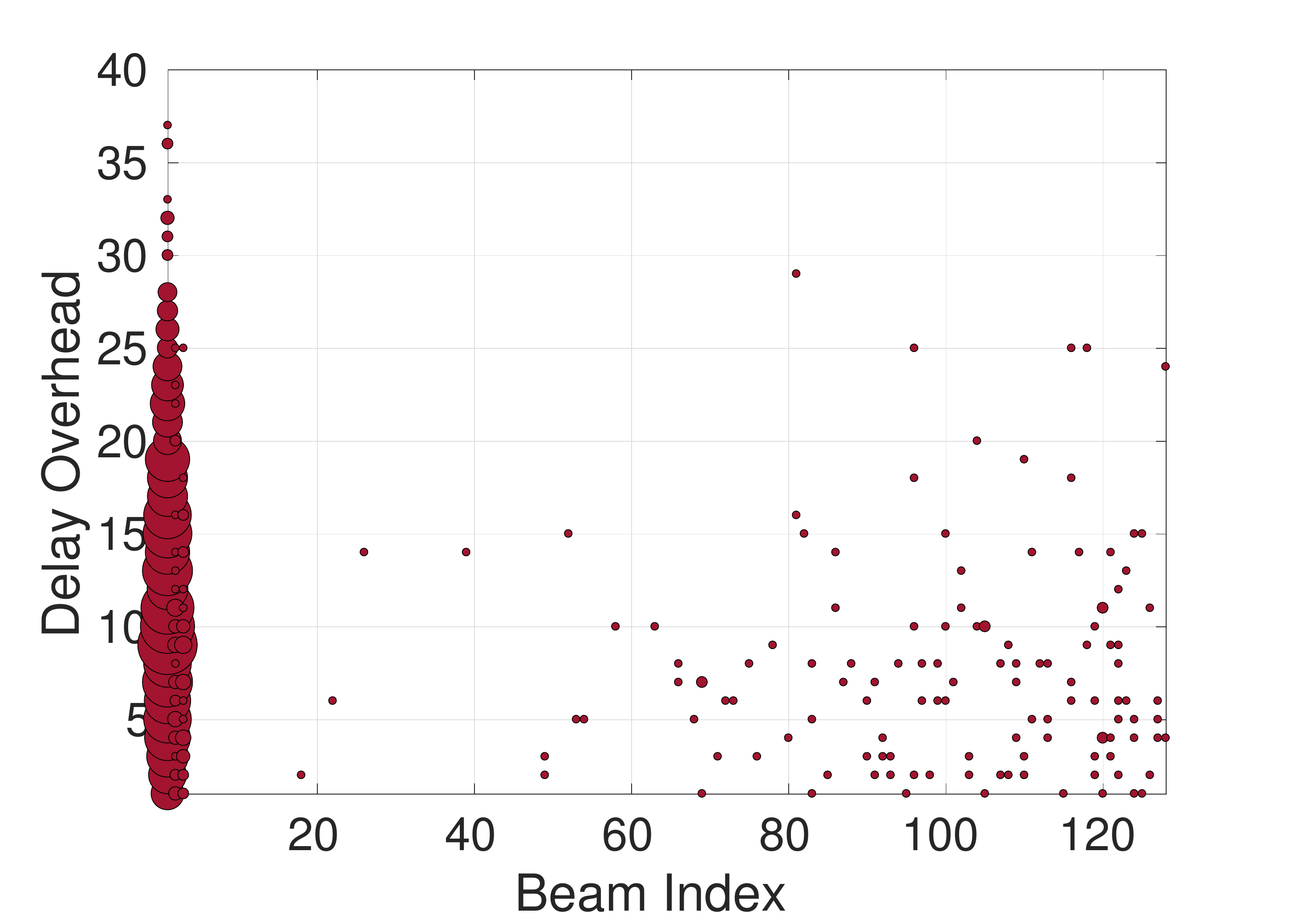}
  \label{fig:scatter-128}
  }
  \quad
  \subfigure[CDF of delay overhead to find the optimal beam]{\includegraphics[scale=.195, trim=0cm .1cm 1cm .75cm, clip]{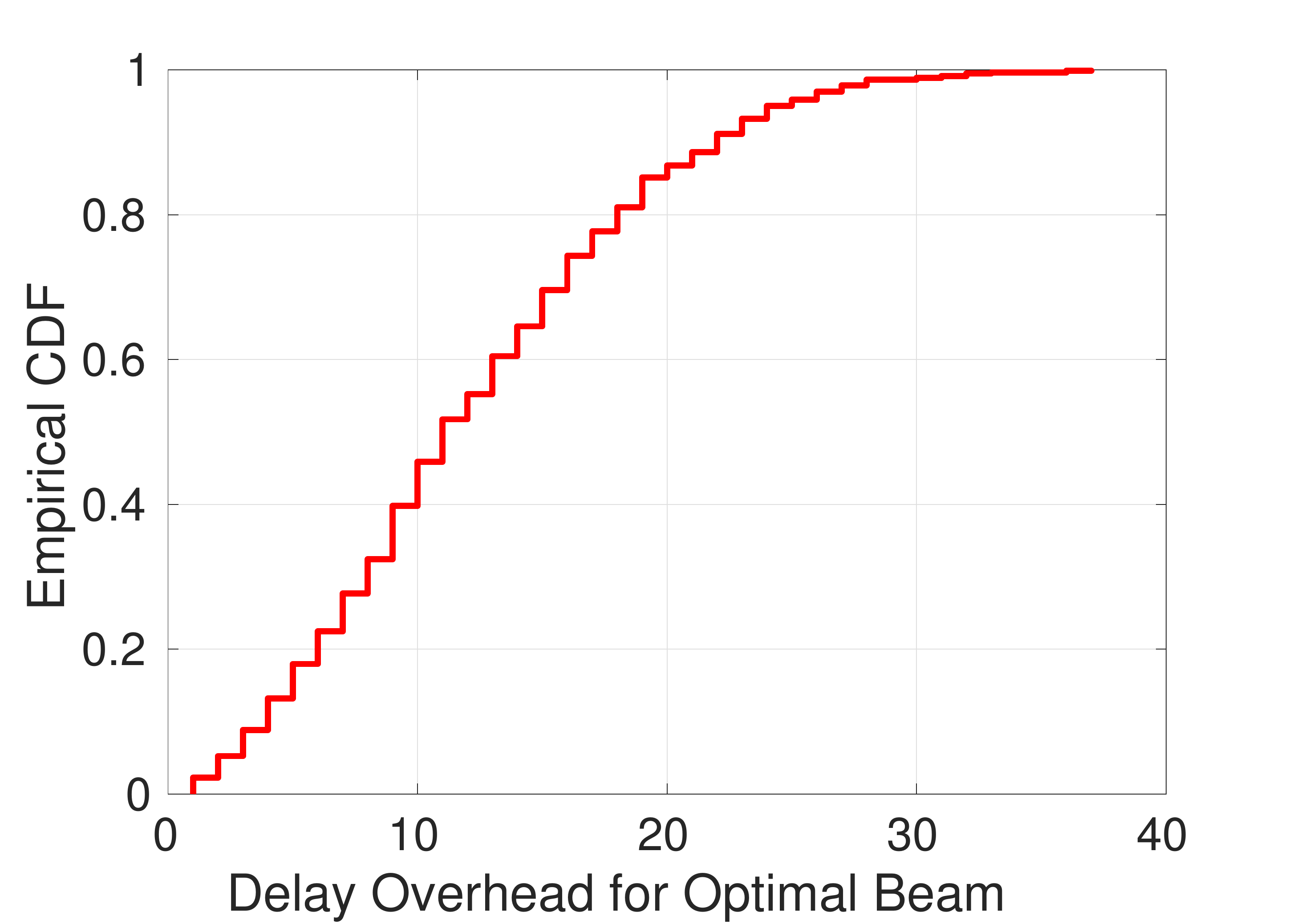}
  \label{fig:delay-optimal-beam-128}
  }
 \caption[=]{Accuracy and delay overhead of the proposed UBA algorithm with the detection threshold $\Psi = 4$ and \textbf{$\mathbf{128}$ beams.}}
  \label{fig:accuracy-delay-128}
\end{figure*}
 In order to reduce the overhead of exhaustive search methods, we investigated an online stochastic optimization problem and proposed an equivalent structured Multi-Armed bandit model. In this case, the problem of finding the best beam pair is reduced to finding the optimal arm at each time slot such that the overall regret is minimized. 
  We exploit the contextual information in order to reduce the search space, and thus the overhead of exhaustive beam selection.
  Thanks to the structural properties, we demonstrated that the regret bound does not depend on the size of decision space that is equal to the the number of transmit and receive beams multiplied. This is a crucial property in MIMO settings in which the number of all combinations of transmit and receive beams grows quickly. We further proposed an asymptotically optimal algorithm for the beam alignment problem and demonstrated its performance via simulations. 

 \section*{Acknowledgment}
This work was supported in part by the following NSF grants: CNS-1518916, CNS-1314822, CNS-1618566, CNS-1514260, CNS-1518829, and ONR grants: N00014-15-1-2166 and N00014-17-1-2417.  

The authors would like to thank Hongliang Si and Nathan Weirich for performing the experiments.

\appendices

\section{Proof of Theorem 2}
\begin{proof}
\blue{Similar to \cite{combes2014unimodal,paladino2017unimodal},}  we split the $T$ rounds in two sets: those rounds
in which the best arm $k^*$
is the leader , i.e., $L(t) = k^*$, and those in which the leader is another arm, i.e., $L(t) \neq k^*$. Therefore:

\small
\begin{align*}
\mathcal{R}(T) = \sum_{k \neq k^*} (\mu_{k^*} - \mu_k)  \mathds{E} \big[\sum_{t=1}^T & \mathbf{1}\{k(t) = k\}\big] & \nonumber \\ = \sum_{k \neq k^*} (\mu_{k^*} - \mu_k)  \mathds{E}  \big[\sum_{t=1}^T \mathbf{1}\{L(t) = k^* \ & \text{and} \ k(t)  = k\}\big]  \nonumber \\ + \sum_{k \neq k^*} (\mu_{k^*} - \mu_k) \mathds{E} \big[\sum_{t=1}^T \mathbf{1}\{L(t) \neq k^* & \ \text{and} \ k(t) = k\}\big].
\end{align*}
\normalsize

\noindent
If we consider the first term, the proposed algorithm behaves like the UCB algorithm restricted to the optimal arm and its neighborhood, and the regret upper bound is the one presented in \cite{cappe2013kullback}, i.e., for every $\epsilon > 0$: 

\small
$$
\mathcal{R}_1 (T) \leq (1+\epsilon) \sum_{k\in N(k^*)} \frac{\mu_{k^*} - \mu_k}{I (\theta_k, \theta_{k^*})}\big[ \log(T) + \log(\log(T))\big] + C, 
$$
\normalsize
 where $C$ is a constant. For the second part, we have:

\small
\begin{multline*}
\mathcal{R}_2 (T)= \sum_{k \neq k^*} (\mu_{k^*} - \mu_k) \mathds{E} \left[\sum_{t=1}^T \mathbf{1}\big\{L(t) \neq k^* \ \text{and} \ k(t) = k\big\}\right],
\end{multline*}
\normalsize

\noindent or $\mathcal{R}_2 (T) \leq \sum_{k\neq k^*}
 \mathds{E}[l_{k}(T)]$. Next, we provide an upper bound on the number of times that arm $k$ has been the leader, i.e., $l_{k} (T)$, with $\hat{l}_{k}(T)$ that is the number of rounds spent with arm $k$ as leader in the case only its neighborhood is considered during the whole time horizon $T$. Therefore, we have: 

\small
\begin{align}
\mathcal{R}_2 (T) \leq \sum_{k\neq k^*} \mathds{E}[l_{k}(T)] \leq \sum_{k\neq k^*}  \mathds{E}& [\hat{l}_{k}(T)] = \sum_{k \neq k^*} \sum_{t=1}^T \mathds{E} [\mathbf{1} \{L(t) = k \}]  \nonumber \\ = \sum_{k\neq k^*} \sum_{t=1}^T \mathds{E}\big[\mathbf{1}\{\hat{\mu}_{k}(t) = &\max_{j \in N(k)} \hat{\mu}_{j}(t)\}\big] ,
\end{align}
\normalsize

\noindent where, with an abuse of notations, $L(t)$ denotes the leader at round $t$ in this modified problem where only $N(k)$ is considered.  Since arm $k$ is the leader, its empirical mean is the maximum in its neighborhood, i.e., $\hat{\mu}_{k}(t)~\geq~\hat{\mu}_{k'} (t)$ in which $k' = \argmax_{i \in N(k)} \mu_i$. Thus, we have: 
$
\mathcal{R}_2 (T) \leq \sum_{k\neq k^*} \sum_{t=1}^T \mathds{E} [\mathbf{1}\{\hat{\mu}_{k}(t)~\geq~\hat{\mu}_{k'} (t)\}]  = \sum_{k\neq k^*} \sum_{t=1}^T \mathds{P} \big(\hat{\mu}_{k}(t) \geq \hat{\mu}_{k'} (t)\big). 
$
Defining $\Delta_k = \max_{k' \in N(k)} \hat{\mu}_{k'} - \mu_k$ as the expected loss incurred in choosing arm $k$ instead of its best adjacent one $k'$, we have:

\vspace{-.5cm}
\small
\begin{multline}
\vspace{-.3cm}
\mathcal{R}_2(T) \leq \sum_{k\neq k^*} \sum_{t=1}^T \mathds{P} \big(\hat{\mu}_{k}(t) - \mu_k - \frac{\Delta_k}{2} - \hat{\mu}_{k'}(t) + \mu_{k'} - \frac{\Delta_k}{2} \geq 0 \big) \nonumber \\
\leq \sum_{k\neq k^*} \bigg[\ \underbrace{\sum_{t=1}^T \mathds{P} \big(\hat{\mu}_{k}(t) - \mu_k - \frac{\Delta_k}{2} \geq 0 \big)}_{\mathcal{R}_{2,1}^k (T)}  + \nonumber \\ \underbrace{\sum_{t=1}^T \mathds{P} \big(\hat{\mu}_{k'} (t) - \mu_{k'} + \frac{\Delta_k}{2}\big) \leq 0}_{\mathcal{R}_{2,2}^k(T)} \bigg] 
\end{multline}
\normalsize

\vspace{-.4cm}
\noindent For the first term, we have: 
 
 \vspace{-.6cm}
 \small
\begin{align}
\mathcal{R}_{2,1}^k (T) =  \sum_{t=1}^T \mathds{P} (\hat{\mu}_{k}(t) \geq \mu_k + \frac{\Delta_k}{2})  \nonumber \\ 
= \sum_{t=1}^T  \sum_{h=1}^t \mathds{P}  (s_k(t) = h |  \hat{\mu}_k (t) \geq  \mu_k + \frac{\Delta_k}{2}) \mathds{P} (\hat{\mu} _k(t) & \geq \mu_k + \frac{\Delta_k}{2}) & \nonumber \\ 
\leq \sum_{t=1}^T \sum_{h=1}^t \mathds{P} (s_k (t) = h | \hat{\mu}_k (t) \geq \mu_k + \frac{\Delta_k}{2})  e^{-\frac{h\Delta_k^2}{2}},
\end{align}  
\normalsize

\noindent where the last inequality is due to the Chernoff-Hoeffding inequality expressed as follows:
\begin{lemma} (Chernoff-Hoeffding inequality) Let $X_1, ..., X_n$ be random variables with common
range $[0, 1]$ and such that $\mathds{E}[X_t|X_1, ...,  X_{t-1}] = \mu$. Let $S_n = X_1 +...+ X_n$. Then for all $a\geq 0$, we have: 
$\mathds{P} (S_n \geq n \mu+a) \leq e^{-2a^2/n}.$
\end{lemma}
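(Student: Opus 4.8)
The plan is to prove this via the exponential moment (Chernoff) method combined with Hoeffding's lemma, exploiting the conditional-mean (martingale-difference) structure. First I would fix an arbitrary $\lambda > 0$ and apply Markov's inequality to the nonnegative random variable $e^{\lambda(S_n - n\mu)}$, which yields
$$
\mathds{P}(S_n \geq n\mu + a) \leq e^{-\lambda a}\, \mathds{E}\!\left[e^{\lambda(S_n - n\mu)}\right].
$$
Everything then reduces to controlling the moment generating function $\mathds{E}[e^{\lambda(S_n - n\mu)}]$ and optimizing the free parameter $\lambda$ at the end.

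The central ingredient is a conditional version of Hoeffding's lemma: for a mean-zero random variable $Y$ taking values in an interval of length one, one has $\mathds{E}[e^{\lambda Y}] \leq e^{\lambda^2/8}$, and the same bound holds for conditional expectations when $Y$ has conditional mean zero given the past. Here I set $Y_t = X_t - \mu$; since $X_t \in [0,1]$ we have $Y_t \in [-\mu,\, 1-\mu]$, an interval of length exactly one, and the hypothesis $\mathds{E}[X_t \mid X_1,\ldots,X_{t-1}] = \mu$ gives $\mathds{E}[Y_t \mid X_1,\ldots,X_{t-1}] = 0$. Thus $\mathds{E}[e^{\lambda Y_t}\mid X_1,\ldots,X_{t-1}] \leq e^{\lambda^2/8}$ for every $t$.

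Next I would peel off the factors one at a time using the tower property. Writing $e^{\lambda(S_n - n\mu)} = \prod_{t=1}^n e^{\lambda Y_t}$ and conditioning on $X_1,\ldots,X_{n-1}$, the last factor's conditional expectation is bounded by $e^{\lambda^2/8}$, so
$$
\mathds{E}\!\left[e^{\lambda(S_n - n\mu)}\right] \leq e^{\lambda^2/8}\,\mathds{E}\!\left[e^{\lambda(S_{n-1} - (n-1)\mu)}\right].
$$
Iterating this $n$ times yields $\mathds{E}[e^{\lambda(S_n - n\mu)}] \leq e^{n\lambda^2/8}$, and combining with the Markov step gives $\mathds{P}(S_n \geq n\mu + a) \leq e^{-\lambda a + n\lambda^2/8}$. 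Finally I would minimize the exponent over $\lambda > 0$; setting the derivative $-a + n\lambda/4$ to zero gives the optimum $\lambda = 4a/n$, which produces exponent $-2a^2/n$ and hence the claimed bound $e^{-2a^2/n}$.

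The main obstacle is establishing the conditional Hoeffding lemma itself — the convexity/Taylor argument bounding $\mathds{E}[e^{\lambda Y}]$ by $e^{\lambda^2/8}$ for a mean-zero variable on an interval of length one. Everything else (Markov's inequality, the telescoping via iterated conditioning, and the scalar optimization over $\lambda$) is routine; the only subtlety to watch is that the conditioning must be arranged so that each factor $e^{\lambda Y_t}$ is averaged against its own past, which is precisely where the per-step hypothesis $\mathds{E}[X_t \mid X_1,\ldots,X_{t-1}] = \mu$ is needed rather than mere pairwise independence.
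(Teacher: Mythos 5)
Your proposal is correct, but there is nothing in the paper to compare it against: the paper states this Chernoff--Hoeffding inequality as an auxiliary lemma inside the proof of Theorem 2 and never proves it, treating it as a standard known result imported from the bandit literature. Your argument is the classical one — Markov's inequality applied to $e^{\lambda(S_n - n\mu)}$, a conditional Hoeffding lemma giving $\mathds{E}[e^{\lambda(X_t-\mu)}\mid X_1,\ldots,X_{t-1}] \leq e^{\lambda^2/8}$ (valid since $X_t-\mu$ lies in the length-one interval $[-\mu,1-\mu]$ and has conditional mean zero), telescoping via the tower property to get $\mathds{E}[e^{\lambda(S_n-n\mu)}] \leq e^{n\lambda^2/8}$, and optimizing at $\lambda = 4a/n$ to obtain the exponent $-2a^2/n$ — and all of these steps check out, including the correct handling of the martingale-difference hypothesis rather than independence, which is exactly what the lemma's statement $\mathds{E}[X_t\mid X_1,\ldots,X_{t-1}]=\mu$ supplies. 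The one ingredient you state but do not prove is Hoeffding's lemma itself (the $e^{\lambda^2/8}$ bound for a mean-zero variable on a unit-length interval); you flag this honestly as the remaining obstacle, and since it is a textbook convexity argument, leaving it as a cited building block is reasonable — indeed the paper leaves far more unproven, namely the entire statement.
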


\noindent Therefore, we have: 

\vspace{-.4cm}
\small
\begin{multline}
\mathcal{R}_{2,1} ^{k} (T) \leq \sum_{t=1}^T \bigg( \sum_{h=1}^{x_0} \mathds{P} (s_k (t) = h | \hat{\mu}_k (t) \geq \mu_k + \frac{\Delta_k}{2})e^{-\frac{h\Delta_k^2}{2}} \nonumber \\ + \frac{2}{\Delta_k^2} e^{\frac{-x_0 \Delta_k^2}{2}}\bigg), 
\end{multline}
\normalsize 

\noindent since $\sum_{t=x+1}^\infty e^{-kt} \leq \frac{1}{k} e^{-kx}$. Then, we have: 
$
\mathcal{R}_{2,1} ^k (T) \leq \sum_{t=1}^T \bigg( x_0 \mathds{P} (s_k (t) \leq x_0)e^{-\frac{h\Delta_k^2}{2}} + \frac{2}{\Delta_k^2} e^{\frac{-x_0 \Delta_k^2}{2}}\bigg), 
$
where we dropped the conditioning. Since the expected number of times that the non-optimal arm has been played is bounded and its variance is bounded as well, using Bernstein’s inequality (provided below), we have: 
$
\mathds{P} \big(s_k (t) \leq x_0\big) \leq e^{-x_0/5},
$
and since $x_0$ is lower bounded, we conclude that $\mathcal{R}_{2,k} (T)$ is bounded by a constant, i.e., $\mathcal{R}_{2,1}^k(T) \leq C'$.
\begin{lemma} (Bernstein inequality)
Let $X_1, ..., X_n$ be random variables with range in $[0,1]$ and 
$
\sum_{t=1}^n \text{Var}[X_t|X_{t-1}, ..., X_1] = \sigma^2.
$
Let $S_n = X_1 + X_2 + .. + X_n$. Then for all $a \leq 0 $, we have: 
$
\mathds{P}\big(S_n - \mathds{E}[S_n] \geq a\big) \leq e^{\frac{-a^2/2}{\sigma^2+a/2}}.
$
\end{lemma}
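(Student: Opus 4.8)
The plan is to prove the tail bound by the exponential (Chernoff) method applied to the martingale structure that is implicit in the conditional-variance hypothesis; note that as stated the bound is a one-sided upper-tail inequality, so I read the condition as $a\geq 0$. First I would fix $\lambda>0$ and center each variable by its conditional mean, setting $D_t=X_t-\mathds{E}[X_t\mid X_1,\dots,X_{t-1}]$, so that the partial sums $M_n:=\sum_{t=1}^n D_t$ form a zero-mean martingale with $\mathds{E}[D_t^2\mid X_1,\dots,X_{t-1}]=\text{Var}[X_t\mid X_{t-1},\dots,X_1]$ and $D_t\leq 1$ (since $X_t\in[0,1]$); under the same convention as Lemma 1, that the conditional means are constant, one has $M_n=S_n-\mathds{E}[S_n]$. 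Markov's inequality applied to the increasing function $e^{\lambda(\cdot)}$ then gives
\[
\mathds{P}\big(S_n-\mathds{E}[S_n]\geq a\big)\leq e^{-\lambda a}\,\mathds{E}\!\left[e^{\lambda M_n}\right].
\]

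The core estimate is a per-step bound on the conditional moment generating function. For a mean-zero random variable $D\leq 1$ with second moment $v$, I would use the elementary inequality $e^{\lambda x}\leq 1+\lambda x+(e^{\lambda}-1-\lambda)x^2$, valid for all $x\leq 1$ because $x\mapsto (e^{\lambda x}-1-\lambda x)/x^2$ is nondecreasing and hence maximized at $x=1$. Taking conditional expectations (the linear term vanishes) and applying $1+u\leq e^u$ yields
\[
\mathds{E}\!\left[e^{\lambda D_t}\mid X_1,\dots,X_{t-1}\right]\leq \exp\!\big((e^{\lambda}-1-\lambda)\,v_t\big),\qquad v_t=\text{Var}[X_t\mid X_{t-1},\dots,X_1].
\]

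Next I would peel the increments of $M_n$ one at a time via the tower property, $\mathds{E}[e^{\lambda M_n}]=\mathds{E}\big[e^{\lambda M_{n-1}}\,\mathds{E}[e^{\lambda D_n}\mid X_1,\dots,X_{n-1}]\big]$, and conclude by induction, using $\sum_t v_t=\sigma^2$, that $\mathds{E}[e^{\lambda M_n}]\leq \exp\big((e^{\lambda}-1-\lambda)\sigma^2\big)$. Combining with the Chernoff bound gives, for every $\lambda>0$,
\[
\mathds{P}\big(S_n-\mathds{E}[S_n]\geq a\big)\leq \exp\!\big(-\lambda a+(e^{\lambda}-1-\lambda)\sigma^2\big).
\]
Optimizing the exponent over $\lambda$ (the minimizer is $\lambda=\log(1+a/\sigma^2)$, giving the sharp Bennett exponent $-\sigma^2 h(a/\sigma^2)$ with $h(u)=(1+u)\log(1+u)-u$) and then relaxing via $h(u)\geq \tfrac{u^2/2}{1+u/2}$ collapses the bound to $\exp\big(-\tfrac{a^2/2}{\sigma^2+a/2}\big)$, exactly as claimed.

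The martingale peeling and the Chernoff step are routine; the only nontrivial ingredients — and hence the main obstacles — are verifying the monotonicity of $(e^{\lambda x}-1-\lambda x)/x^2$ that underlies the per-step MGF estimate, and checking that the final scalar optimization reproduces the stated constant $a/2$ in the denominator (which follows from the weaker relaxation $h(u)\geq \tfrac{u^2/2}{1+u/2}$ of the familiar $a/3$ bound, so the stated form is a valid though slightly loose consequence).
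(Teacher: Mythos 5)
Your proof is correct, but note that the paper itself contains no proof of this lemma to compare against: Lemma 2 is stated in the appendix as a known auxiliary tool (a Bernstein/Freedman-type martingale inequality, of the kind standard in the bandit literature it draws on) and is invoked without proof to bound $\mathds{P}(s_k(t)\leq x_0)$ in the proof of Theorem~2. What you supply is the standard Chernoff--Bennett derivation, and every step checks out: the per-step MGF bound via the monotonicity of $x\mapsto (e^{\lambda x}-1-\lambda x)/x^2$ (equivalently of $(e^x-1-x)/x^2$) is the classical Bennett estimate; the optimizer $\lambda=\log(1+a/\sigma^2)$ and the exponent $-\sigma^2 h(a/\sigma^2)$ are right; and you correctly observe both that the stated hypothesis ``$a\leq 0$'' is a typo for $a\geq 0$ and that the denominator $\sigma^2+a/2$ is a valid \emph{loosening} of the classical $\sigma^2+a/3$, since $h(u)\geq \tfrac{u^2/2}{1+u/3}\geq \tfrac{u^2/2}{1+u/2}$ for $u\geq 0$. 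Two small points of care, neither a genuine gap: first, you rightly flag that $M_n=S_n-\mathds{E}[S_n]$ only under a constant-conditional-mean convention (as in the paper's Lemma~1); in full generality the result should be stated for the martingale $M_n$ itself. Second, when the conditional variances $v_t$ are random, the one-step peeling $\mathds{E}[e^{\lambda M_{n-1}}e^{(e^\lambda-1-\lambda)v_n}]$ does not let you pull $v_n$ out directly; the clean fix is to observe that $\exp\bigl(\lambda M_t-(e^\lambda-1-\lambda)\sum_{s\leq t}v_s\bigr)$ is a supermartingale and use $\sum_t v_t=\sigma^2$ only at the end, which is Freedman's argument and is exactly what your induction amounts to when the $v_t$ are deterministic. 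With that phrasing adjusted, your write-up is a complete and correct proof of a statement the paper leaves unproven.
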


\noindent For the $\mathcal{R}_{2,2}^k (T)$ term, we have: 

\vspace{-.4cm}
\small
\begin{multline}
\mathcal{R}_{2,2}^k (T) = \sum_{t=1}^T \mathds{P} (\hat{\mu}_{k'} (t) \leq \mu_{k'} - \frac{\Delta_k}{2}) \nonumber \\ 
= \sum_{t=1}^T \sum_{h=1}^t \mathds{P} (s_{k'}(t) = h \ \text{and} \ \hat{\mu}_{k'} (t) \leq \mu_k' - \frac{\Delta_k}{2}) \nonumber \\ 
= \sum_{t=1}^T \sum_{h=1}^{t_b} \mathds{P} (s_{k'}(t) = h \ \text{and} \ \hat{\mu}_{k'} (t) \leq \mu_k' - \frac{\Delta_k}{2}) \nonumber \\ 
+ \sum_{t=1}^T \sum_{h=t_b+1}^t \mathds{P} (s_{k'}(t) = h | \hat{\mu}_{k'} (t) \leq \mu_k' - \frac{\Delta_k}{2}) \mathds{P} (\hat{\mu}_{k'} (t) \leq \mu_k' - \frac{\Delta_k}{2}) \nonumber \\
\leq \sum_{t=1}^\infty \mathds{E} [\mathbf{1}\{s_{k'}(t) \leq t_b\}] + \sum_{t=1}^T \sum_{h=t_b+1}^t \mathds{P} (\hat{\mu}_{k'} (t) \leq \mu_k' - \frac{\Delta_k}{2}) \nonumber \\
\leq C + \sum_{t=1}^T \sum_{h=t_b+1}^t e^{-\frac{t\Delta_k^2}{2}}.
\end{multline}
\normalsize

\noindent It is straightforward to show that $\mathcal{R}_{2,2}^k (T)$ is bounded by a constant as well. By considering the three partial results on $\mathcal{R}_1$, $\mathcal{R}_{2,1}^k$, and $\mathcal{R}_{2,2}^k$, we have:
$$
\mathcal{R} (T) \leq \mathcal{R}_1 (T) + \sum_{k \neq k^*} \mathcal{R}_{2,1}^k (T) + \mathcal{R}_{2,2}^k (T),
$$
and the theorem statement follows. 
\end{proof}

\small
\bibliographystyle{IEEEtran}
\bibliography{../References}
\end{document}